\newcommand{\dd}{\mathop{}\!\mathrm{d}}
\newcommand{\DTSPsin}[2]{\{ #1_{#2}\}_{#2 \in \Naturals}}
\newcommand{\DTSPdou}[2]{\{ #1_{#2}\}_{#2 \in \Integers}} 
\newcommand{\DTSPgen}[2]{\{ #1_{#2}\}}
\newcommand{\vect}[1]{\mathbf{#1}} 
\newcommand{\I}[1]{\operatorname{I}\{#1\}}
\newcommand{\bfx}{{\mathbf x}}
\newcommand{\Prs}[1]{\operatorname{\textnormal{Pr}}(#1)}
\newcommand{\Prv}[1]{\operatorname{\textnormal{Pr}}[#1]}
\newcommand{\eps}{\epsilon}
\newcommand{\bfZ}{{\mathbf Z}}
\newcommand{\Reals}{\mathbb R} 
\newcommand{\abs}[1]{\lvert#1\rvert} 
\newcommand{\card}[1]{\abs{#1}} 
\newcommand{\set}[1]{\mathcal{#1}} 
\newcommand{\eqdef}{\triangleq} 
\newcommand{\E}[1]{\operatorname{E}[#1]} 
\newcommand{\Naturals}{\mathbb{N}}
\newcommand{\Integers}{\mathbb{Z}}
\newcommand{\intinf}{\int_{-\infty}^{\infty}}
\renewcommand{\epsilon}{\varepsilon}
\newcommand{\mat}[1]{\mathsf{#1}}
\newcommand{\Normal}[2]{\mathcal{N}\left({#1},{#2}\right)} 
\newcommand{\trans}[1]{#1^{\textnormal{\textsf{\tiny T}}}} 
\newcommand{\weaktyp}{\set{T}_{n}^{\epsilon}}
\newcommand{\contyp}[1]{\set{G}_{n}^{\epsilon}(#1)}
\newtheorem{theorem}{Theorem}
\newtheorem{proposition}[theorem]{Proposition}
\newtheorem{lemma}[theorem]{Lemma}
\newtheorem{remark}[theorem]{Remark}
\author{Christoph Bunte and Amos Lapidoth\thanks{This work was
    presented in part at the Seventh Joint Workshop on Coding and
    Communications (JWCC) 2014 November 13-–15, 2014, Barcelona,
    Spain}}
\title{Maximum R\'enyi Entropy  Rate}
\begin{document}

\maketitle

\begin{abstract}
  Two maximization problems of R\'enyi entropy rate are investigated:
  the maximization over all stochastic processes whose marginals
  satisfy a linear constraint, and the Burg-like maximization 
  over all stochastic processes whose autocovariance function begins
  with some given values. The solutions are related to the solutions
  to the analogous maximization problems of Shannon entropy rate.
\end{abstract}

\textbf{Keywords:} R\'enyi entropy, R\'enyi entropy rate, entropy
rate, maximization, Burg's Theorem.

\section{Introduction}
Motivated by recent results providing an operational meaning to
R\'enyi entropy \cite{BunteLapidothTasmenia}, we study the
maximization of the R\'enyi entropy rate (or ``R\'enyi rate'') over
the class of stochastic processes $\DTSPdou{Z}{k}$
that satisfy
\begin{equation}
  \label{eq:AAconstSP}
  \Prv{Z_{k} \in \set{S}} = 1, \quad \E{r(Z_{k})} \leq \Gamma, \quad
  k \in \Integers,
\end{equation}
where $\set{S} \subseteq \Reals$ is some given support set, $r(\cdot)$
is some cost function, $\Gamma \in \Reals$ is some maximal-allowed
average cost, and $\Reals$ and $\Integers$ denote the reals and the
integers respectively.

If instead of R\'enyi rate we had maximized the Shannon rate, we could
have limited ourselves to memoryless processes, because the Shannon
entropy of a random vector is upper-bounded by the sum of the Shannon
entropies of its components, and this upper bound is tight when the
components are independent.\footnote{Throughout this paper ``Shannon entropy''
  refers to differential Shannon entropy.} But this bound does not hold for R\'enyi
entropy: the R\'enyi entropy of a vector with dependent components can
exceed the sum of the R\'enyi entropies of its
components. Consequently, the solution to the maximization of the
R\'enyi rate subject to~\eqref{eq:AAconstSP} is typically not
memoryless. This maximum and the structure of the stochastic processes
that approach it is the subject of this paper.

Another class of stochastic processes that we shall consider is
related to Burg's work on spectral estimation \cite{Burg},
\cite[Theorem~12.6.1]{cover2006elements}. It comprises all (one-sided)
stochastic processes $\DTSPsin{X}{i}$
that, for some given $\alpha_{0}, \ldots, \alpha_{p} \in \Reals$,
satisfy
\begin{IEEEeqnarray}{c}
  \E{X_i X_{i+k}} = \alpha_k, \quad \Bigl( i\in\Naturals, \; k\in \{0,
  \ldots, p\} \Bigr),
  \label{eq:Burg_cons}
\end{IEEEeqnarray}
where $\Naturals$ denotes the positive integers. While Burg studied the
maximum over this class of the Shannon rate, we will study the maximum
of the R\'enyi rate.

We emphasize that our focus here is on the maximization of R\'enyi
rate and not entropy. The latter is studied in \cite{Eilat2014},
\cite{SundaresanKumarForward}, \cite{LutwakYangZhang2004}, and
\cite{costa2003solutions}.

To describe our results we need some definitions.  The order-$\alpha$
R\'enyi entropy of a probability density function (PDF) $f$ is defined
as
\begin{equation}
  \label{eq:renyi}
  h_{\alpha}(f) = \frac{1}{1-\alpha} \log \intinf f^{\alpha}(x) \dd{x}, 
\end{equation}
where $\alpha$ can be any positive number other than one. The
integrand is nonnegative, so the integral
on the RHS of~\eqref{eq:renyi} always exists, possibly taking on the
value $+\infty$, in which case we define $h_{\alpha}(f)$ as $+\infty$ if
$0<\alpha<1$ and as $-\infty$ if $\alpha>1$. With this
convention the R\'enyi entropy always exists and 
\begin{align}
  \label{eq:h_ge_minus}
  h_{\alpha}(f) & > -\infty, \quad 0 < \alpha < 1, \\
  h_{\alpha}(f) & < +\infty, \quad \alpha > 1. \label{eq:h_le_plus}
\end{align}
When a random variable (RV) $X$ is of density $f_{X}$ we sometimes
write $h_{\alpha}(X)$ instead of $h_{\alpha}(f_{X})$. The R\'enyi
entropy of some multivariate densities are computed in \cite{ZografosNadarjah2005}.

If the support of $f$ is contained in $\set{S}$, then
\begin{equation}
  \label{eq:logSupp}
  h_{\alpha}(f) \leq \log \> \card{\set{S}}, \quad 
  \Bigl(\alpha > 0, \; \alpha \neq 1\Bigr),
\end{equation}
where $\card{\set{A}}$ denotes the Lebesgue measure of the set
$\set{A}$, and where we interpret $\log \> \card{\set{S}}$ as $+\infty$
when $\card{\set{S}}$ is infinite. (Throughout this paper we define
$\log \infty = \infty$ and $\log 0 = -\infty$.)

The R\'enyi entropy is closely related to the Shannon entropy:
\begin{equation}
  \label{eq:shannon}
  h(f) = -\intinf f(x) \log f(x) \dd{x}.
\end{equation}
(The integral on the RHS of~\eqref{eq:shannon} need not exist. If it
does not, then we say that $h(f)$ does not exist.) 
Depending on whether $\alpha$ is smaller or larger than one, the
R\'enyi entropy can be larger or smaller than the Shannon
entropy. Indeed, if $f$ is of Shannon entropy $h(f)$ (possibly
$+\infty$), then by \cite[Lemma~5.1 (iv)]{WangMadiman2014}:
\begin{IEEEeqnarray}{rCl"s}
  h_{\alpha}(f) & \leq & h(f), &  for $\alpha>1$;
  \label{eq:Ren_le_Shannon}
  \\
  h_{\alpha}(f) & \geq & h(f), & for $0<\alpha<1$.
  \label{eq:Ren_ge_Shannon}
\end{IEEEeqnarray}
Moreover, under some mild technical conditions \cite[Lemma~5.1
(ii)]{WangMadiman2014}:
\begin{equation}
  \label{eq:limRen}
  \lim_{\alpha\to 1} h_{\alpha}(f)=h(f). 
\end{equation}

The order-$\alpha$ R\'enyi rate $h_{\alpha}(\DTSPgen{X}{k})$ of a
stochastic process (SP) $\DTSPgen{X}{k}$ is defined as
\begin{equation}
  \label{eq:def_rate}
  h_{\alpha}(\DTSPgen{X}{k}) = \lim_{n \to \infty} \frac{1}{n} 
  h_{\alpha}\bigl(X_{1}^{n}\bigr)
\end{equation}
whenever the limit exists.\footnote{We say that the limit exists and
  is equal to $+\infty$ if for every $\mathsf{M} > 0$ there exists
  some $n_{0}$ such that for all $n > n_{0}$ the R\'enyi entropy
  $h_{\alpha}(X_{1}, \ldots, X_{n})$ exceeds $n \mathsf{M}$, possibly
  by being $+\infty$.}  Here $X_{i}^{j}$ denotes the tuple $(X_{i},
\ldots, X_{j})$. 

Notice that if each $X_{k}$ takes value in $\set{S}$, then $X_{1}^{n}$
takes value in $\set{S}^{n}$, and it then follows
from~\eqref{eq:logSupp} that $h_{\alpha}(X_{1}^{n}) \leq
\log \card{\set{S}}^{n}$ and thus
\begin{equation}
  \label{eq:SP_ub_S}
  h_{\alpha}(\DTSPgen{X}{k}) \leq \log \card{\set{S}}.
\end{equation}
Another upper bound on $h_{\alpha}(\{X_{k}\})$, one that is valid for
$\alpha > 1$, can be obtained by noting that when $\alpha > 1$ we can
use~\eqref{eq:Ren_le_Shannon} to obtain
\begin{align}
  h_{\alpha}(X_{1}^{n}) & \leq h(X_{1}^{n}) \label{eq:Orthod10}\\
  & \leq \sum_{i=1}^{n} h(X_{i}), \label{eq:up_by_sum_Shannon}
\end{align}
and thus, by~\eqref{eq:Orthod10},
\begin{equation}
  \label{eq:SP_up_Shannon}
  h_{\alpha}(\DTSPgen{X}{k}) \leq h(\DTSPgen{X}{k}), \quad \alpha > 1,
\end{equation}
whenever both $h_{\alpha}(\DTSPgen{X}{k})$ and the Shannon rate
$h(\DTSPgen{X}{k})$ exist.

The R\'enyi rate of finite-state Markov chains was computed by Rached,
Alajaji, and Campbell \cite{Rached_Alajaji_Campbell} with extensions
to countable state space in \cite{GolshaniPasha_Yair}.  The R\'enyi
rate of stationary Gaussian processes was found by Golshani and Pasha
in \cite{GolshaniPasha_Gauss}. Extensions are explored in
\cite{Khodabin_ADK}.

\section{Main Results}
We discuss the constraints~\eqref{eq:AAconstSP}
and~\eqref{eq:Burg_cons} separately. The proofs pertaining to the
former are in Section~\ref{sec:Proofs1} and to the latter in
Section~\ref{sec:proofburgrenyi}.

\subsection{Max R\'enyi Rate Subject to~\eqref{eq:AAconstSP}}

Let $h^{\star}(\Gamma)$ denote the supremum of $h(f_{X})$ over all
densities $f_{X}$ under which
\begin{equation}
  \label{eq:constraints_on_X}
  \Prs{X \in \set{S}} = 1 
  \quad \text{and} \quad
  \E{r(X)} \leq \Gamma.
\end{equation}
Here and throughout the supremum should be interpreted as $-\infty$
whenever the maximization is over an empty set. Thus, if no
distribution satisfies \eqref{eq:constraints_on_X}, then
$h^{\star}(\Gamma)$ is $-\infty$.

We shall assume that for some $\Gamma_{0} \in \Reals$ 
\begin{subequations}
  \label{block:amos}
  \begin{equation}
    \label{eq:h_Gamma0_ge}
    h^{\star}(\Gamma_{0}) > -\infty,
  \end{equation}
  and 
  \begin{equation}
    \label{eq:finite_for_geq}
    h^{\star}(\Gamma) < \infty \quad \text{for every $\Gamma \geq \Gamma_{0}$}.
  \end{equation}
\end{subequations}
Under this assumption the function $h^{\star}$ has the following properties:
\begin{proposition}
  \label{prop:HstarProperties}
  Let $\Gamma_{0}$ satisfy \eqref{block:amos}. Then over the interval
  $[\Gamma_{0},\infty)$ the function $h^{\star}(\cdot)$ is finite,
  nondecreasing, and concave. It is continuous over
  $(\Gamma_{0},\infty)$, and
  \begin{equation}
    \label{eq:Lim_h_star}
    \lim_{\Gamma \to \infty}
    h^{\star}(\Gamma) = \log \,\card{\set{S}}.
  \end{equation}
\end{proposition}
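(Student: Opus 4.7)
The plan is to handle the five claimed properties in the order monotonicity, finiteness, concavity, continuity, limit, since each step leans on earlier ones. Monotonicity is essentially the definition: if $\Gamma_1 \leq \Gamma_2$, any density feasible for $\Gamma_1$ is feasible for $\Gamma_2$, so the suprema are nested. Finiteness on $[\Gamma_0,\infty)$ is then immediate from the hypotheses: the upper bound is \eqref{eq:finite_for_geq}, while monotonicity combined with \eqref{eq:h_Gamma0_ge} gives $h^{\star}(\Gamma) \geq h^{\star}(\Gamma_0) > -\infty$.

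For concavity, I would use the standard mixture argument. Given $\Gamma_1, \Gamma_2 \geq \Gamma_0$, $\lambda \in [0,1]$, and $\eps > 0$, pick densities $f_1, f_2$ satisfying \eqref{eq:constraints_on_X} with costs $\Gamma_1, \Gamma_2$ such that $h(f_i) > h^{\star}(\Gamma_i) - \eps$ (possible since both $h^{\star}(\Gamma_i)$ are now known to be finite). The mixture $f = \lambda f_1 + (1-\lambda)f_2$ is supported in $\set{S}$ and, by linearity of expectation, satisfies $\E{r(X)} \leq \lambda \Gamma_1 + (1-\lambda)\Gamma_2$. Since the functional $f \mapsto -\int f\log f$ is concave (from the concavity of $-x \log x$), one has $h(f) \geq \lambda h(f_1) + (1-\lambda) h(f_2)$, so
\begin{equation*}
h^{\star}\bigl(\lambda \Gamma_1 + (1-\lambda) \Gamma_2\bigr) \geq \lambda h^{\star}(\Gamma_1) + (1-\lambda) h^{\star}(\Gamma_2) - \eps,
\end{equation*}
and $\eps\downarrow 0$ yields the concavity inequality. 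Continuity on the open interval $(\Gamma_0,\infty)$ is then automatic: any real-valued concave function is continuous on the interior of any interval where it is finite.

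The last step, the limit \eqref{eq:Lim_h_star}, is where the main care is required. The upper bound $h^{\star}(\Gamma) \leq \log\card{\set{S}}$ is the standard differential-entropy bound (or one may take $\alpha\to 1$ in \eqref{eq:logSupp} via \eqref{eq:limRen}), with our convention $\log\infty = \infty$ covering the infinite case. For the matching lower bound, I would exhibit explicit admissible densities with entropies tending to $\log\card{\set{S}}$. Let $\set{S}_M = \{x \in \set{S} : r(x) \leq M\} \cap [-M, M]$. Assuming $r$ is real-valued, $\set{S}_M \uparrow \set{S}$ up to a null set, so by monotonicity of Lebesgue measure $\card{\set{S}_M} \to \card{\set{S}}$; moreover $\card{\set{S}_M} > 0$ for $M$ large enough (otherwise no feasible density could exist, contradicting \eqref{eq:h_Gamma0_ge}). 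The uniform density on $\set{S}_M$ satisfies \eqref{eq:constraints_on_X} with $\Gamma = M$ and has Shannon entropy $\log\card{\set{S}_M}$, giving $h^{\star}(M) \geq \log\card{\set{S}_M} \to \log\card{\set{S}}$.

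The only nontrivial obstacle is this last construction: the double truncation by both the cost level and the magnitude is needed to guarantee that $\set{S}_M$ has finite positive measure in a way that works uniformly for both finite and infinite $\card{\set{S}}$. Everything else is a textbook application of the definitions and concavity of $h(\cdot)$.
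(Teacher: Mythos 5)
Your proof is correct and follows essentially the same route as the paper: monotonicity from nested constraint sets, concavity from concavity of the Shannon entropy functional via a mixture argument, continuity from concavity, and the limit via uniform densities on cost-level sets together with the MCT. The one cosmetic difference is in the final step: you intersect the level set with $[-\mathsf{M},\mathsf{M}]$ so that every $\set{S}_{\mathsf{M}}$ has finite measure and directly supports a uniform density, whereas the paper first lower-bounds $h^{\star}(\Gamma)$ by $\log\card{\{x\in\set{S}: r(x)\le\Gamma\}}$ and separately remarks that when this set has infinite measure the bound is obtained by approximating from within by compact subsets; both resolve the same technical point.
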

\begin{proof}
  Monotonicity is immediate from the definition because increasing
  $\Gamma$ enlarges the set of densities that
  satisfy~\eqref{eq:constraints_on_X}.  Concavity follows from the
  concavity of Shannon entropy, and continuity follows from
  concavity. It remains to establish~\eqref{eq:Lim_h_star}. To this end we first
  argue that for every $\Gamma$,
  \begin{equation}
    \label{eq:max_Shannon_uniform}
    h^{\star}(\Gamma) \leq \log \,\card{\set{S}}.
  \end{equation}
  When $\card{\set{S}}$ is infinite this is trivial, and when
  $\card{\set{S}}$ is finite this follows by noting that
  $h^{\star}(\Gamma)$ cannot exceed the maximum of the Shannon entropy
  in the absence of cost constraints, and the latter is achieved by a
  uniform distribution on $\set{S}$ and is equal to $\log
  \,\card{\set{S}}$. In view of~\eqref{eq:max_Shannon_uniform}, our
  claim~\eqref{eq:Lim_h_star} will follow once we establish that
  \begin{equation}
    \label{eq:lim_need_liminf}
    \varliminf_{\Gamma \to \infty}
    h^{\star}(\Gamma) \geq \log \,\card{\set{S}},
  \end{equation}
  which is what we set out to prove next. 

  We first note that for every $\Gamma \in \Reals$
  \begin{equation}
    \label{eq:Aamos10}
    h^{\star}(\Gamma) \geq \log \, \card{\{x\in \set{S} \colon r(x)\leq
      \Gamma\}} 
  \end{equation}
  because when the RHS is finite it can be achieve by a uniform
  distribution on the set $\{x\in\set{S} \colon r(x) \leq \Gamma \}$, a
  distribution under which~\eqref{eq:constraints_on_X} clearly holds,
  and when it is infinite, it can be approached by uniform
  distributions on ever-increasing compact subsets of this set. We
  next note that, by the Monotone Convergence Theorem (MCT),
  \begin{equation}
    \label{eq:Aamos20}
    \lim_{\Gamma\to\infty} \card{\{x\in\set{S} \colon r(x) \leq \Gamma\}} 
    = \card{\set{S}}.
  \end{equation}
  Combining~\eqref{eq:Aamos10} and~\eqref{eq:Aamos20}
  establishes~\eqref{eq:lim_need_liminf} and hence completes the proof
  of~\eqref{eq:Lim_h_star}.
\end{proof}

For $\alpha > 1$ we note that~\eqref{eq:def_rate},
\eqref{eq:up_by_sum_Shannon}, and the definition of
$h^{\star}(\Gamma)$ imply that for every SP $\DTSPgen{Z}{k}$
satisfying~\eqref{eq:AAconstSP}
\begin{equation}
  \label{eq:ConverseBiggerThan1}
  h_{\alpha}(\DTSPgen{Z}{k}) \leq h^{\star}(\Gamma), \quad \alpha > 1,
\end{equation}
and consequently,
\begin{equation}
  \label{eq:AAAConverseBiggerThan1}
  \sup h_{\alpha}(\DTSPgen{Z}{k}) \leq h^{\star}(\Gamma), \quad \alpha > 1,
\end{equation}
where the supremum is over all SPs
satisfying~\eqref{eq:AAconstSP}. Perhaps surprisingly, this bound is tight:
\begin{theorem}[Max R\'enyi Rate for $\alpha > 1$]
  \label{thm:AlphaBig}
  Suppose that $\alpha > 1$, and that $\Gamma > \Gamma_{0}$, where
  $\Gamma_{0}$ satisfies \eqref{block:amos}. Then for every
  $\tilde{\eps} > 0$ there exists a stationary SP $\DTSPgen{Z}{k}$
  satisfying~\eqref{eq:AAconstSP}
  whose R\'enyi rate is defined and exceeds $h^{\star}(\Gamma) -
  \tilde{\eps}$.
\end{theorem}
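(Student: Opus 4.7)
The plan is to approximate $h^{\star}(\Gamma)$ by the R\'enyi rate of a stationary process whose finite-block distributions are nearly uniform on an AEP-style typical set of an almost-optimal Shannon-entropy maximizer. The underlying idea is that the R\'enyi entropy of a uniform distribution equals the log of its support's measure, so approximate uniformity on a set of nearly-maximal log-measure lifts the R\'enyi rate up to (roughly) the Shannon entropy of the maximizer --- bypassing the gap in~\eqref{eq:Ren_le_Shannon}.

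First, I would use continuity of $h^{\star}$ on $(\Gamma_{0},\infty)$ (Proposition~\ref{prop:HstarProperties}) to pick $\Gamma' \in (\Gamma_{0},\Gamma)$ with $h^{\star}(\Gamma') > h^{\star}(\Gamma) - \tilde{\eps}/2$, and then, by the definition of $h^{\star}(\Gamma')$ as a supremum, a density $f$ on $\set{S}$ with $\E{r(X)} \leq \Gamma'$ and $h(f) > h^{\star}(\Gamma) - \tilde{\eps}$, the latter finite because $h(f) \leq h^{\star}(\Gamma') < \infty$ by~\eqref{eq:finite_for_geq}. For a small $\delta > 0$ and all large $n$, define the permutation-invariant typical set
\begin{equation*}
  A_{n} = \Bigl\{ x_{1}^{n} \in \set{S}^{n} : \Bigabs{\tfrac{1}{n}\sum_{i=1}^{n} \log f(x_{i}) + h(f)} \leq \delta, \ \tfrac{1}{n}\sum_{i=1}^{n} r(x_{i}) \leq \Gamma \Bigr\}.
\end{equation*}
By the weak law of large numbers and the strict inequality $\Gamma' < \Gamma$, $\Prs{X_{1}^{n} \in A_{n}} \to 1$ under i.i.d.\ $f$, which forces $\bigabs{A_{n}} \geq 2^{n(h(f) - 2\delta)}$ for large $n$. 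Fix such an $n$ with $\delta$ so small that $(1/n)\log\bigabs{A_{n}} \geq h^{\star}(\Gamma) - \tilde{\eps}$.

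For the construction, let $\{B^{(j)}\}_{j\in\Integers}$ be i.i.d.\ uniform on $A_{n}$, let $T$ be uniform on $\{0,\ldots,n-1\}$ independent of the $B^{(j)}$, define the block-concatenated sequence $W_{jn+i} = B^{(j)}_{i}$, and set $Z_{k} = W_{k+T}$. The random shift makes $\{Z_{k}\}$ stationary, each $Z_{k}$ lies in $\set{S}$, and permutation invariance of $A_{n}$ gives $\E{r(Z_{k})} = \E{\tfrac{1}{n}\sum_{i=1}^{n} r(B^{(0)}_{i})} \leq \Gamma$, so~\eqref{eq:AAconstSP} holds.

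The heart of the argument is to show that the R\'enyi rate of $\{Z_{k}\}$ exists and equals $(1/n) \log \bigabs{A_{n}}$. Conditionally on $T = t$, the vector $Z_{1}^{m}$ consists of two partial blocks bracketing $N_{t} \in \{\lfloor m/n \rfloor - 1, \lfloor m/n \rfloor\}$ complete independent uniform-on-$A_{n}$ blocks; by independence its density factorizes, and
\begin{equation*}
  \int f_{Z_{1}^{m} \mid T = t}^{\alpha}\,\dd z = K_{t,n,\alpha} \cdot \bigabs{A_{n}}^{-(\alpha - 1)N_{t}},
\end{equation*}
with $K_{t,n,\alpha}$ a finite constant independent of $m$ (the partial-block marginals of the uniform law on $A_{n}$ are bounded densities on bounded sets, hence in every $L^{\alpha}$). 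The unconditional density is the mixture $f_{Z_{1}^{m}} = n^{-1}\sum_{t=0}^{n-1} f_{Z_{1}^{m} \mid T=t}$. Since $\alpha > 1$ and $x \mapsto x^{\alpha}$ is convex, Jensen's inequality yields the upper bound $\int f_{Z_{1}^{m}}^{\alpha} \leq n^{-1}\sum_{t} K_{t,n,\alpha} \bigabs{A_{n}}^{-(\alpha - 1)N_{t}}$, and the pointwise inequality $f_{Z_{1}^{m}} \geq n^{-1} f_{Z_{1}^{m} \mid T = 0}$ gives the matching lower bound $\int f_{Z_{1}^{m}}^{\alpha} \geq n^{-\alpha} \bigabs{A_{n}}^{-(\alpha - 1)\lfloor m/n \rfloor}$. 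Both sides scale as $\bigabs{A_{n}}^{-(\alpha - 1)m/n}$ up to factors that are constant in $m$, so $(1/m)\, h_{\alpha}(Z_{1}^{m}) \to (1/n)\log\bigabs{A_{n}} \geq h^{\star}(\Gamma) - \tilde{\eps}$. The main obstacle is this last computation: for $\alpha > 1$, the R\'enyi entropy of a mixture can be strictly smaller than that of any of its components, so one cannot just condition on $T = 0$; getting the actual \emph{limit} (rather than merely a liminf lower bound) requires the matching pointwise lower bound on $\int f^{\alpha}$ to complement the Jensen upper bound.
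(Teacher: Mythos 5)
Your construction and overall strategy coincide with the paper's: pick an almost-optimal density at a cost budget slightly below $\Gamma$, form a combined weak-typical and cost-typical set $A_n$ (the paper's $\weaktyp\cap\contyp{}$), take $f_n$ uniform on it, block-concatenate with a uniformly random circular shift $T$ to stationarize, and control the R\'enyi entropy of the resulting mixture $f_{Z_1^m}=n^{-1}\sum_t f_{Z_1^m\mid T=t}$ from below via the pointwise bound (the paper's Lemma~\ref{lem:minimaljensen}) and from above via Jensen (the paper's Lemma~\ref{lem:MixtureUpper}). So the approach is essentially the same as the paper's Proposition~\ref{prop:stationarization} machinery.

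However, there is a genuine gap at the one place where the argument is delicate, namely where you assert that the partial-block factors $K_{t,n,\alpha}$ are finite because ``the partial-block marginals of the uniform law on $A_n$ are bounded densities on bounded sets.'' Neither clause is justified: $A_n$ is unbounded in general (take $\set{S}=\Reals$ with a bounded or even trivial $r$; the weak-typicality constraint bounds only the \emph{average} of $\log f(x_i)$, not the individual coordinates), and the marginal of a bounded density on an unbounded set need not be bounded and need not even lie in $L^\alpha$. If some partial-block marginal has $h_\alpha=-\infty$, i.e.\ $K_{t,n,\alpha}=\infty$, then your Jensen upper bound on $\int f_{Z_1^m}^\alpha$ is vacuous, and that is precisely the bound you need in order to \emph{lower}-bound the R\'enyi rate (for $\alpha>1$ a large $\int f^\alpha$ means a small $h_\alpha$). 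This is exactly the issue the paper handles with the two steps you omitted: Proposition~\ref{prop:CanAssume} lets one replace $f$ by a \emph{bounded} density $f^\star$ with nearly the same Shannon entropy and cost, and Lemma~\ref{lem:RenThenFin} then shows that every marginal of the uniform density on $\weaktyp(f^\star)\cap\contyp{f^\star}$ is dominated by a constant times the product $f^\star(x_1)\cdots f^\star(x_\rho)$, hence bounded, hence (by Proposition~\ref{prop:RenyiOfBoundedDensity}) of finite order-$\alpha$ R\'enyi entropy. Inserting that reduction fixes your proof; without it the ``main obstacle'' you yourself identify is not actually overcome. (A second, smaller slip: your lower bound $\int f_{Z_1^m\mid T=0}^\alpha \geq \card{A_n}^{-(\alpha-1)\lfloor m/n\rfloor}$ implicitly assumes the partial-block factor is at least $1$, which need not hold; what you actually have is a finite positive constant depending on $\tilde\rho\in\{0,\dots,n-1\}$, which suffices after dividing by $m$.)
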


For $0 < \alpha < 1$ we can use~\eqref{eq:SP_ub_S} to obtain for the
same supremum
\begin{equation}
  \label{eq:AAAConverseSmallerThan1}
  \sup h_{\alpha}(\DTSPgen{Z}{k}) \leq \log \card{\set{S}}, \quad 0 < \alpha < 1.
\end{equation}
This seemingly crude bound is tight:
\begin{theorem}[Max R\'enyi Rate for $0 < \alpha < 1$]
  \label{thm:AlphaSmall}
  Suppose that $0 < \alpha < 1$ and that $\Gamma > \Gamma_{0}$, where
  $\Gamma_{0}$ satisfies \eqref{block:amos}.
  \begin{itemize} 
\item If $|\set{S}| = \infty$, then for every $\mathsf{M} \in \Reals$ there
  exists a stationary SP $\DTSPgen{Z}{k}$ satisfying~\eqref{eq:AAconstSP}
  whose R\'enyi rate is defined and exceeds $\mathsf{M}$.
  \item If $|\set{S}| < \infty$, then for every $\tilde{\eps} > 0$ there
      exists a stationary SP $\DTSPgen{Z}{k}$ satisfying~\eqref{eq:AAconstSP}
  whose R\'enyi rate is defined and exceeds $\log |\set{S}| -
  \tilde{\eps}$.
    \end{itemize}
\end{theorem}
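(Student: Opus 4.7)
The plan is to exhibit a stationary mixture process that exploits a property of R\'enyi entropy peculiar to the range $0<\alpha<1$: if $f = (1-p)f_{0} + p f_{1}$, then pointwise $f^{\alpha} \geq p^{\alpha} f_{1}^{\alpha}$, so upon integration $h_{\alpha}(f) \geq \tfrac{\alpha}{1-\alpha}\log p + h_{\alpha}(f_{1})$. Applied to the joint density of an $n$-block by coupling the two mixture components with a \emph{single} shared coin toss (rather than an independent one per time index), the $O(1)$ penalty $\tfrac{\alpha}{1-\alpha}\log p$ becomes negligible after dividing by $n$, so the R\'enyi rate inherits the per-symbol R\'enyi entropy of the higher-entropy component.

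Concretely, I would pick an IID process $\DTSPdou{Y}{k}$ with density $f_{Y}$ supported in $\set{S}$ satisfying $\E{r(Y_{1})} \leq \Gamma_{0}$ (existence from \eqref{eq:h_Gamma0_ge}), set $\set{A} = \{x \in \set{S} : r(x) \leq \Gamma'\}$ for a parameter $\Gamma' < \infty$ to be chosen, let $\DTSPdou{U}{k}$ be IID uniform on $\set{A}$ (so $\E{r(U_{1})} \leq \Gamma'$), and take a Bernoulli($1-p$) variable $W$ independent of everything else. Set $Z_{k} = Y_{k}$ on $\{W=1\}$ and $Z_{k} = U_{k}$ on $\{W=0\}$. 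Stationarity is inherited from the IID components; choosing $p \in (0,1)$ small enough that $(1-p)\Gamma_{0} + p\Gamma' \leq \Gamma$---feasible because $\Gamma > \Gamma_{0}$---ensures $\E{r(Z_{k})} \leq \Gamma$ and $\Prs{Z_{k} \in \set{S}} = 1$, so \eqref{eq:AAconstSP} holds.

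The joint density of $Z_{1}^{n}$ is $f_{n}(z_{1}^{n}) = (1-p)\prod_{i=1}^{n} f_{Y}(z_{i}) + p \prod_{i=1}^{n} f_{U}(z_{i})$, where $f_{U} = \card{\set{A}}^{-1}\I{x \in \set{A}}$. The pointwise bound $f_{n}^{\alpha} \geq p^{\alpha}\prod_{i} f_{U}^{\alpha}(z_{i})$ integrates to $h_{\alpha}(Z_{1}^{n}) \geq \tfrac{\alpha \log p}{1-\alpha} + n\log\card{\set{A}}$, giving $\liminf_{n\to\infty} \tfrac{1}{n} h_{\alpha}(Z_{1}^{n}) \geq \log\card{\set{A}}$. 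Since $\card{\set{A}} \to \card{\set{S}}$ as $\Gamma'\to\infty$ by \eqref{eq:Aamos20}, $\Gamma'$ can be chosen to make $\log\card{\set{A}}$ exceed $\mathsf{M}$ in Case 1 or exceed $\log\card{\set{S}} - \tilde{\eps}$ in Case 2.

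The main subtlety is to ensure the R\'enyi rate is genuinely \emph{defined}, rather than merely bounded below in $\liminf$. For this I would pair the lower bound with a matching upper bound via the subadditivity inequality $(a+b)^{\alpha} \leq a^{\alpha} + b^{\alpha}$ (valid for $a,b\geq 0$ and $0<\alpha<1$), which applied to $f_{n}^{\alpha}$ gives $\int f_{n}^{\alpha}\,\dd z_{1}^{n} \leq (1-p)^{\alpha}\bigl(\int f_{Y}^{\alpha}\,\dd x\bigr)^{n} + p^{\alpha}\card{\set{A}}^{n(1-\alpha)}$ and hence, provided $h_{\alpha}(f_{Y})<\infty$, pins $\tfrac{1}{n}h_{\alpha}(Z_{1}^{n})$ to $\max\{h_{\alpha}(f_{Y}),\log\card{\set{A}}\}$ in the limit. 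Finiteness of $h_{\alpha}(f_{Y})$ comes for free in Case 2 by \eqref{eq:logSupp}, and in Case 1 can be arranged by taking $f_{Y}$ uniform on a bounded positive-measure subset of $\set{S}\cap\{r\leq\Gamma_{0}\}$ (which has positive Lebesgue measure because any density with $\E{r(Y)} \leq \Gamma_{0}$ supported in $\set{S}$ must assign mass to it). Either way the rate is at least $\log\card{\set{A}}$, giving the theorem.
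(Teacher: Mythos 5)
Your proposal is correct, and it takes a genuinely different and more elementary route than the paper's. The paper proves Theorem~\ref{thm:AlphaSmall} by constructing an $n$-block density $f_n$ as a mixture of uniform densities on two $\eps$-weakly-typical sets $\set{S}_0$, $\set{S}_1$ (one respecting the cost budget, one with Shannon entropy near $\log\card{\set{S}}$), bounding $h_\alpha(f_n)$ via disjointness of those sets, and then invoking Proposition~\ref{prop:stationarization} to build a stationary process out of an IID sequence of $n$-blocks with a random time-shift; it also splits off a sub-case to handle $h_\alpha(X_1^n)=+\infty$. Your construction skips all of this. By flipping a \emph{single} coin to choose once and for all between two IID processes $\{Y_k\}$ (low-cost, bounded uniform, finite $h_\alpha$) and $\{U_k\}$ (uniform on a large-measure subset $\set{A}\subseteq\set{S}\cap\{r\le\Gamma'\}$), you get a process that is stationary by exchangeability, has the $n$-block mixture density $(1-p)\prod f_Y + p\prod f_U$, and whose $\alpha$-R\'enyi entropy per symbol is pinned to $\max\{h_\alpha(f_Y),\log\card{\set{A}}\}$ by the elementary bounds $f_n^\alpha\ge p^\alpha\prod f_U^\alpha$ and $(a+b)^\alpha\le a^\alpha+b^\alpha$. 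This avoids typicality, the AEP, and Proposition~\ref{prop:stationarization} entirely, and because you choose both mixture components to be bounded uniform densities, you never need the separate $h_\alpha=+\infty$ sub-case. What the paper's heavier machinery buys is reuse: the same typicality/stationarization scaffold is shared with Theorem~\ref{thm:AlphaBig} ($\alpha>1$), where your concavity trick reverses sign and a single global coin flip would incur an $\Omega(n)$ R\'enyi-entropy penalty from the mixing weight via Lemma~\ref{lem:MixtureUpper}; there, aiming for $h^\star(\Gamma)$ rather than $\log\card{\set{S}}$ genuinely requires something like the AEP.

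Two small points to tighten. First, for $\card{\set{S}}=\infty$ the set $\{x\in\set{S}:r(x)\le\Gamma'\}$ may itself have infinite Lebesgue measure, in which case there is no uniform density on it; you should take $\set{A}$ to be a bounded (hence finite-measure) subset of it with $\card{\set{A}}>2^{\mathsf{M}}$, which exists by \eqref{eq:Aamos20} and inner regularity of Lebesgue measure. Second, the claim that $\set{S}\cap\{r\le\Gamma_0\}$ has positive Lebesgue measure deserves a one-line justification: if a density $f$ supported on $\set{S}$ had $\Prs{r(X)\le\Gamma_0}=0$, then $r(X)-\Gamma_0>0$ a.s.\ would force $\E{r(X)}>\Gamma_0$ (the expectation is well defined since $r(X)$ is then bounded below), contradicting $h^\star(\Gamma_0)>-\infty$. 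With these two remarks in place, the argument is complete.
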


\begin{remark} Theorems~\ref{thm:AlphaBig} and~\ref{thm:AlphaSmall}
  can be generalized in a
straightforward fashion to account for multiple constraints:
\begin{equation}
  \E{r_{i}(Z_k)} \leq \Gamma_{i}, \quad i=1,\ldots,m.
\end{equation}
However, for ease of presentation we focus on the case of a single
constraint.
\end{remark}

A special case of Theorems~\ref{thm:AlphaBig} and~\ref{thm:AlphaSmall}
is when the cost is quadratic, i.e., $r(x) = x^{2}$ and where there are no
restrictions on the support, i.e., $\set{S} = \Reals$. In this case we
can slightly strengthen the results of the above theorems: When we
consider the proofs of these theorems for this case, we see that the
proposed distributions are isotropic. We can thus establish that the
constructed SP is centered and uncorrelated:
\begin{proposition}[R\'enyi Rate under a Second-Moment Constraint] 
  \label{cor:SecondMoment}
  \mbox{}
  \begin{enumerate}
  \item For every $\alpha > 1$, every $\sigma > 0$, and every
    $\tilde{\eps} > 0$ there exists a centered stationary SP
    $\{Y_{k}\}$ whose R\'enyi rate exceeds $\frac{1}{2}\log (2 \pi e
    \sigma^{2}) - \tilde{\eps}$ and that satisfies
    \begin{equation}
      \label{eq:VarianceP}
      \E{Y_{k}Y_{k'}} = \sigma^{2} 1\{k = k'\}.
    \end{equation}

  \item For every $0 < \alpha < 1$, every $\sigma > 0$, and every
    $\mathsf{M} \in \Reals$ there exists a centered stationary SP $\{Y_{k}\}$
    whose R\'enyi rate exceeds $\mathsf{M}$ and that
    satisfies \eqref{eq:VarianceP}.
  \end{enumerate}
\end{proposition}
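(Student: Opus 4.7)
The plan is to specialize Theorems~\ref{thm:AlphaBig} and~\ref{thm:AlphaSmall} to $r(x)=x^{2}$, $\set{S}=\Reals$, and $\Gamma=\sigma^{2}$, and then to observe that for this particular cost the construction used in the proofs of those theorems can be chosen rotationally symmetric, which automatically yields centering, uncorrelatedness, and the correct componentwise variance. Since the Shannon-entropy maximizer under $\E{X^{2}}\leq\sigma^{2}$ is $\Normal{0}{\sigma^{2}}$, we have $h^{\star}(\sigma^{2})=\frac{1}{2}\log(2\pi e\sigma^{2})$, and \eqref{block:amos} is satisfied by any $\Gamma_{0}<\sigma^{2}$. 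The two target bounds $\frac{1}{2}\log(2\pi e\sigma^{2})-\tilde{\eps}$ in Part~1 and $\mathsf{M}$ in Part~2 (using $\card{\set{S}}=\infty$) therefore follow directly from the two theorems; only the additional moment properties need extra argument.

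To supply them, I would revisit the constructions in the proofs of Theorems~\ref{thm:AlphaBig} and~\ref{thm:AlphaSmall}. As is typical in such arguments, the stationary SP is obtained by concatenating i.i.d.\ length-$n$ blocks $(Y_{1},\ldots,Y_{n})$ drawn from some carefully chosen density $f_{n}$ and then randomizing the time-origin uniformly over $\{1,\ldots,n\}$ to restore stationarity. For the present cost and support, one may choose $f_{n}$ spherically symmetric about the origin: for $\alpha>1$ by building it out of the isotropic Gaussian $\Normal{\mathbf{0}}{\sigma^{2}\mat{I}_{n}}$ (whose typical set is a spherical annulus), and for $0<\alpha<1$ by building it out of a rotationally symmetric reference density on an increasingly large ball, whose R\'enyi entropy grows without bound.

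Once $f_{n}$ is rotationally invariant, the moment identities follow for free. Invariance under the reflection $y_{i}\mapsto-y_{i}$ gives $\E{Y_{i}}=0$; invariance under the $(i,j)$-plane rotation $(y_{i},y_{j})\mapsto(-y_{j},y_{i})$ gives $\E{Y_{i}Y_{j}}=-\E{Y_{i}Y_{j}}$ and hence $\E{Y_{i}Y_{j}}=0$ for $i\neq j$; and permutation symmetry together with the second-moment budget (which we may arrange to saturate, at negligible cost in R\'enyi rate) gives $\E{Y_{i}^{2}}=\sigma^{2}$. Independence of distinct blocks kills the cross-block covariances, and the uniform time-shift preserves all of these moments, so \eqref{eq:VarianceP} holds for the final stationary SP. The main obstacle is the first sentence of the preceding paragraph: one must verify that the typical-set construction underlying Theorems~\ref{thm:AlphaBig} and~\ref{thm:AlphaSmall} genuinely admits a rotationally symmetric implementation without degrading the R\'enyi-rate lower bound---essentially, that replacing the original typical set by a rotationally invariant analog (e.g.\ a spherical annulus in the Gaussian case) still yields a uniform density of R\'enyi entropy close to $n\cdot h^{\star}(\sigma^{2})$ while respecting~\eqref{eq:AAconstSP}.
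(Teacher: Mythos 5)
Your proposal follows essentially the same route as the paper's (very terse) argument: specialize to $r(x)=x^{2}$, $\set{S}=\Reals$, observe that the densities $f_{n}$ in the proofs of Theorems~\ref{thm:AlphaBig} and~\ref{thm:AlphaSmall} can be taken spherically symmetric (for $\alpha>1$, take $f^{\star}$ Gaussian so that $\weaktyp(f^{\star})\cap\contyp{f^{\star}}$ is a spherical annulus; for $0<\alpha<1$, the paper's $f_n$ is a mixture of uniform densities on two disjoint annuli, also spherically symmetric), and read off centering, zero cross-correlation, and equal marginal variance from this symmetry together with block independence and the invariance of these moments under the stationarizing random time shift. Your side remarks on saturating the second-moment budget (e.g.\ by a rescaling that only increases R\'enyi entropy) correctly address the one detail the paper leaves implicit.
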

This proposition will be the key to the proof of
Theorem~\ref{thm:burgrenyi} ahead.

\subsection{Max R\'enyi Rate Subject to~\eqref{eq:Burg_cons} }

Given $\alpha_0, \ldots, \alpha_p\in\Reals$, consider the family of
all stochastic processes $X_1, X_2, \ldots$ satisfying~\eqref{eq:Burg_cons}.
Assume that the $(p+1)\times (p+1)$ matrix whose Row-$\ell$
Column-$m$ element is $\alpha_{|\ell-m|}$ is positive definite. Under
this assumption we have:
\begin{theorem}
  \label{thm:burgrenyi}
  The supremum of the order-$\alpha$ R\'enyi rate over all
  stochastic processes satisfying \eqref{eq:Burg_cons} is $+\infty$ for
  $0<\alpha<1$ and is equal to the Shannon rate of the $p$-th
  order Gauss-Markov process for $\alpha > 1$.
\end{theorem}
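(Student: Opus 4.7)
The plan is to treat both regimes by the same construction: an autoregressive process of order $p$ driven by non-Gaussian but \emph{uncorrelated} innovations supplied by Proposition~\ref{cor:SecondMoment}, with Gaussian initial conditions chosen to enforce the correct covariances on $X_{1},\ldots,X_{p}$. Let $a_{1},\ldots,a_{p}\in\Reals$ and $\sigma^{2}>0$ be the Yule--Walker coefficients and the innovation variance of the $p$-th order Gauss--Markov process whose autocovariance function begins with $\alpha_{0},\ldots,\alpha_{p}$; both exist and are uniquely determined because the matrix $[\alpha_{|\ell-m|}]$ is positive definite.

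For the upper bound (needed only when $\alpha>1$), I would combine $h_{\alpha}(X_{1}^{n}) \leq h(X_{1}^{n})$ from~\eqref{eq:Ren_le_Shannon} with the chain rule and the fact that conditioning reduces entropy:
\[
h(X_{1}^{n}) \leq h(X_{1}^{p}) + \sum_{i=p+1}^{n} h\bigl(X_{i}\bigm| X_{i-p}^{i-1}\bigr).
\]
The first term is bounded by the Gaussian maximum-entropy value under the covariance $[\alpha_{|\ell-m|}]_{1\le\ell,m\le p}$, and each conditional entropy is bounded by $\tfrac{1}{2}\log(2\pi e \sigma^{2})$ via Jensen's inequality plus the observation that the linear-MMSE prediction error variance of $X_{i}$ from $X_{i-p}^{i-1}$ depends only on $\alpha_{0},\ldots,\alpha_{p}$ and equals exactly $\sigma^{2}$ by Yule--Walker. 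Dividing by $n$ and sending $n\to\infty$ gives the desired upper bound $\tfrac{1}{2}\log(2\pi e \sigma^{2})$, which is the Shannon rate of the $p$-th order Gauss--Markov process \cite[Theorem~12.6.1]{cover2006elements}.

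For achievability (both regimes), pick $(G_{1},\ldots,G_{p})$ jointly Gaussian with covariance $[\alpha_{|\ell-m|}]_{1\le\ell,m\le p}$, and let $\{Y_{k}\}$ be a centered stationary SP, independent of $(G_{1},\ldots,G_{p})$, with $\E{Y_{k}Y_{k'}}=\sigma^{2}\mathbf{1}\{k=k'\}$ and R\'enyi rate either exceeding $\tfrac{1}{2}\log(2\pi e \sigma^{2})-\tilde{\eps}$ (when $\alpha>1$) or exceeding any preassigned $\mathsf{M}$ (when $0<\alpha<1$); Proposition~\ref{cor:SecondMoment} supplies such a $\{Y_{k}\}$. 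Define $X_{j}:=G_{j}$ for $j\le p$ and $X_{n}:=\sum_{i=1}^{p} a_{i} X_{n-i} + Y_{n}$ for $n>p$. Two checks remain. For the autocovariances: the moments $\E{X_{i}X_{j}}$ are determined linearly from $\E{G_{k}G_{l}}$, $\E{Y_{n}Y_{m}}=\sigma^{2}\mathbf{1}\{n=m\}$, and the vanishing cross-moments $\E{G_{k}Y_{n}}=0$, all of which agree with those of the Gauss--Markov process; hence the full autocovariance function matches and~\eqref{eq:Burg_cons} holds. For the R\'enyi rate: the linear map $(X_{1}^{p},Y_{p+1}^{n}) \mapsto X_{1}^{n}$ is lower triangular with unit diagonal, so its determinant is $1$ and R\'enyi entropy is preserved; independence of $X_{1}^{p}$ and $Y_{p+1}^{n}$ then gives additivity,
\[
h_{\alpha}(X_{1}^{n}) = h_{\alpha}(X_{1}^{p}) + h_{\alpha}(Y_{p+1}^{n}).
\]
Dividing by $n$ the first summand vanishes and the second, by stationarity of $\{Y_{k}\}$, tends to $h_{\alpha}(\{Y_{k}\})$. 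Both cases of the theorem follow at once.

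The main obstacle I anticipate is conceptual rather than technical: one must recognise that replacing Gaussian innovations by merely uncorrelated innovations of the same variance leaves the entire autocovariance structure intact (so Burg's constraints still hold) while freeing up the R\'enyi entropy to be much larger than in the Gaussian case when $\alpha<1$, and essentially as large when $\alpha>1$. The two algebraic facts that make the reduction clean---invariance of R\'enyi entropy under unit-determinant affine changes of coordinates, and its additivity over independent blocks---are precisely what allow Proposition~\ref{cor:SecondMoment} to close the argument.
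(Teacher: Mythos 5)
Your proposal is correct, and it rests on the same core construction as the paper: an order-$p$ autoregression with Yule--Walker coefficients, driven by the centered, uncorrelated, non-Gaussian innovations supplied by Proposition~\ref{cor:SecondMoment}. Where you genuinely diverge is in the initialization and the resulting entropy computation. The paper draws the initial $p$-vector $\vect{W}$ from a \emph{discrete} distribution supported on the eigenvectors of $\mat{K}_p$ (so that $\E{\vect{W}\trans{\vect{W}}} = \mat{K}_p$); because $\vect{W}$ then has no density, the paper must treat $f_{\vect{X}}$ as a finite mixture of the conditional densities $f_{\vect{X}|\vect{w}_\ell}$ and invoke Lemmas~\ref{lem:minimaljensen} and~\ref{lem:MixtureUpper} to sandwich $h_\alpha(f_{\vect{X}})$ between $h_\alpha(Z_1^n)$ and $h_\alpha(Z_1^n)$ plus an $n$-independent constant. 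You instead take a Gaussian initialization $(G_1,\ldots,G_p)$ with covariance $\mat{K}_p$, independent of $\{Y_k\}$, which \emph{does} have a density; this lets you replace the mixture machinery by the two cleaner facts that R\'enyi entropy is invariant under unit-Jacobian affine maps and additive over independent blocks, so $h_\alpha(X_1^n) = h_\alpha(G_1^p) + h_\alpha(Y_{p+1}^n)$ exactly, and the first term is a finite constant that vanishes after dividing by $n$. That is a real simplification: it avoids both mixture lemmas entirely. (One should only note, as you implicitly do, that the additivity identity continues to hold in the degenerate case $h_\alpha(Y_{p+1}^n)=+\infty$, which occurs for $0<\alpha<1$.) On the converse side, the paper handles $\alpha>1$ by combining~\eqref{eq:SP_up_Shannon} with a citation of Burg's theorem, whereas you re-derive the bound from scratch via the chain rule, the fact that conditioning reduces entropy, Jensen's inequality, and the observation that the linear-MMSE one-step prediction error is pinned at $\sigma^2$ by the constraints~\eqref{eq:Burg_cons}; your version is self-contained and has the minor advantage of giving a finite-$n$ bound, sidestepping any concern about whether the Shannon rate exists for an arbitrary constrained process. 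Both routes are valid; yours is arguably more economical in its prerequisites.
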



\section{Preliminaries}






\subsection{Weak Typicality}
\label{sec:weak_typ}

Given a density $f$ on $\set{S}$ of finite  Shannon
entropy 
\begin{equation}
  \label{eq:Finite_Shannon}
  -\infty < h(f) < \infty,
\end{equation}
a positive integer $n$, and some $\eps > 0$, we follow
\cite[Section~8.2]{cover2006elements} and denote by $\weaktyp(f)$ the
set of $\epsilon$-weakly-typical sequences of length $n$ with respect
to $f$:
\begin{IEEEeqnarray}{rCl}
  \IEEEeqnarraymulticol{3}{l}{%
    \weaktyp(f)
  }\nonumber\\*%
  & = & \biggl\{x_1^n \in \set{S}^n \colon 2^{-n(h(f)+\epsilon)}
  \leq \prod_{k=1}^n 
  f(x_k) \leq 2^{-n(h(f)-\epsilon)}\biggr\}.
  \nonumber\\*
  \label{eq:1}
\end{IEEEeqnarray}
By the AEP, if $X_{1}, \ldots, X_{n}$ are drawn IID according to some
such $f$, then the probability of $(X_{1}, \ldots, X_{n})$ being in
$\weaktyp(f)$ tends to $1$ as $n \to \infty$ (with $\eps$ held fixed)
\cite[Theorem 8.2.2]{cover2006elements}.

Given some measurable function $r\colon \set{S} \to \Reals$, some
density~$f$ that is supported on $\set{S}$ and that satisfies 
\begin{equation}
  \label{eq:r_int}
  \int_{\set{S}} f(x) \> |r(x)| \dd{x} < \infty,
\end{equation}
and given some $n \in \Naturals$ and $\eps > 0$, we
define
\begin{equation}
  \contyp{f} = \biggl\{x_1^n\in\set{S}^n \colon 
  \biggl| \frac{1}{n}\sum_{k=1}^n r(x_k) - \int_{\set{S}} f(x) \> r(x)
  \dd{x} \biggr| < \eps
  \biggr\}.
  \label{eq:4}
\end{equation}
By the Law of Large Numbers (LLN), if $X_{1}, \ldots, X_{n}$ are drawn
IID
according to some density $f$ that satisfies the above conditions, 
then the probability of $(X_{1}, \ldots, X_{n})$ being in
$\contyp{f}$ tends to $1$ as $n \to \infty$ (with $\eps$ held
fixed).

From the above observations on $\weaktyp(f)$ and $\contyp{f}$ we
conclude that if $X_{1}, \ldots, X_{n}$ are drawn IID according to
some density $f$ that is supported by $\set{S}$ and that
satisfies~\eqref{eq:Finite_Shannon} and~\eqref{eq:r_int},
then the probability of $(X_{1}, \ldots,
X_{n})$ being in the intersection $\weaktyp(f) \cap \contyp{f}$
tends to $1$ as $n \to \infty$. Thus, for all sufficiently large $n$,
\begin{IEEEeqnarray*}{rCl}
    1-\epsilon & \leq &  \int_{\weaktyp(f) \cap \contyp{f}}
    \prod_{k=1}^n f(x_k) \dd{x}^n
    \\ 
    & \leq &  \card{\weaktyp(f) \cap \contyp{f}}
    \> 2^{-n(h(f)-\epsilon)}, 
  \end{IEEEeqnarray*}
  where the second inequality holds by~\eqref{eq:1}.

  We thus conclude that if the support of $f$ is contained in
  $\set{S}$, the expectation of $|r(X)|$ under $f$ is finite, and
  $h(f)$ is defined and is finite, then
\begin{equation}
  \label{eq:Pre_card_lb}
  \card{\weaktyp(f) \cap \contyp{f}} \geq
  (1-\epsilon) \> 2^{n(h(f)-\epsilon)}, \quad \text{$n$ large.}
\end{equation}

\subsection{On the  R\'enyi Entropy of Mixtures}

The following lemma provides a lower bound on the R\'enyi entropy of a
mixture of densities in terms of the R\'enyi entropy of the individual
densities.
\begin{lemma}
  \label{lem:minimaljensen}
  Let $f_1, \ldots, f_p$ be probability density functions
  on~$\Reals^{n}$ and $q_1, \ldots, q_p \ge 0$ nonnegative numbers
  that sum to one. Let $f$ be the mixture density
  \begin{IEEEeqnarray*}{c}
    f(\vect{x}) = \sum_{\ell=1}^p q_{\ell} f_{\ell}(\vect{x}), \quad
    \vect{x}\in\Reals^n. 
  \end{IEEEeqnarray*}
  Then
  \begin{IEEEeqnarray*}{c}
    h_{\alpha}(f) \ge \min_{1\le \ell \le p} h_{\alpha}(f_{\ell}). 
  \end{IEEEeqnarray*}
\end{lemma}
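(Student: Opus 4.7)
The plan is to derive this as a one-line consequence of Jensen's inequality applied to $t \mapsto t^{\alpha}$, split into the cases $\alpha > 1$ (where this function is convex) and $0 < \alpha < 1$ (where it is concave), with care taken about the sign of $1/(1-\alpha)$.

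First I would handle $\alpha > 1$. By convexity of $t \mapsto t^{\alpha}$ and Jensen's inequality applied to the finite probability measure $(q_{1},\ldots,q_{p})$ on $\{1,\ldots,p\}$,
\begin{IEEEeqnarray*}{c}
  f^{\alpha}(\vect{x}) = \Bigl(\sum_{\ell=1}^{p} q_{\ell}\, f_{\ell}(\vect{x})\Bigr)^{\alpha} \leq \sum_{\ell=1}^{p} q_{\ell}\, f_{\ell}^{\alpha}(\vect{x}), \quad \vect{x} \in \Reals^{n}.
\end{IEEEeqnarray*}
Integrating over $\Reals^{n}$ gives $\int f^{\alpha} \leq \sum_{\ell} q_{\ell} \int f_{\ell}^{\alpha} \leq \max_{\ell} \int f_{\ell}^{\alpha}$. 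Since $\frac{1}{1-\alpha} < 0$, applying $\frac{1}{1-\alpha}\log(\cdot)$ reverses this inequality and yields
\begin{IEEEeqnarray*}{c}
  h_{\alpha}(f) \geq \frac{1}{1-\alpha}\log \max_{\ell} \int f_{\ell}^{\alpha} = \min_{\ell} h_{\alpha}(f_{\ell}).
\end{IEEEeqnarray*}

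Next I would handle $0 < \alpha < 1$. Here $t \mapsto t^{\alpha}$ is concave, so Jensen's inequality reverses and gives $f^{\alpha}(\vect{x}) \geq \sum_{\ell} q_{\ell}\, f_{\ell}^{\alpha}(\vect{x})$. Integrating yields $\int f^{\alpha} \geq \sum_{\ell} q_{\ell} \int f_{\ell}^{\alpha} \geq \min_{\ell} \int f_{\ell}^{\alpha}$. Since $\frac{1}{1-\alpha} > 0$ and $\log$ is increasing, the inequality is preserved and once again $h_{\alpha}(f) \geq \min_{\ell} h_{\alpha}(f_{\ell})$.

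The only bookkeeping issue is the treatment of divergent integrals, but this is not really an obstacle: by the convention adopted after~\eqref{eq:renyi}, an infinite value of $\int f_{\ell}^{\alpha}$ corresponds to $h_{\alpha}(f_{\ell}) = +\infty$ for $0 < \alpha < 1$ and to $h_{\alpha}(f_{\ell}) = -\infty$ for $\alpha > 1$, and the monotone relationship between $\int f_{\ell}^{\alpha}$ and $h_{\alpha}(f_{\ell})$ extends consistently to the extended reals, so both displayed chains of inequalities remain valid. I do not anticipate any genuine difficulty; the content of the lemma is essentially that Jensen's inequality, applied once in the ``right'' direction dictated by the sign of $1-\alpha$, always cooperates with the direction in which the prefactor $\frac{1}{1-\alpha}$ points.
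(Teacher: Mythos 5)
Your proof is correct, and for $\alpha>1$ it is essentially identical to the paper's: convexity of $t\mapsto t^{\alpha}$ gives $f^{\alpha}\le\sum_{\ell}q_{\ell}f_{\ell}^{\alpha}$, integrate, bound by the maximum, and divide by the negative prefactor $\tfrac{1}{1-\alpha}$. For $0<\alpha<1$ the paper simply invokes concavity of R\'enyi entropy as a known fact, whereas you redo the Jensen computation with the reversed inequality ($t\mapsto t^{\alpha}$ now concave) and bound $\sum_{\ell}q_{\ell}\int f_{\ell}^{\alpha}$ from below by the minimum; since $\tfrac{1}{1-\alpha}>0$, the direction is preserved. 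This is a more self-contained and pleasantly unified treatment: both cases become one Jensen step plus sign-bookkeeping, rather than one citation and one computation. What you lose is brevity; what you gain is that the $0<\alpha<1$ case does not lean on the (true but slightly heavier) fact that $f\mapsto h_{\alpha}(f)$ is concave in that regime. Your closing remark about infinite integrals is also sound: if any $\int f_{\ell}^{\alpha}=\infty$ then by the paper's conventions the relevant side of the inequality degenerates to $\pm\infty$ in a way that keeps the bound valid, and the chain $\int f^{\alpha}\le\sum q_{\ell}\int f_{\ell}^{\alpha}$ (for $\alpha>1$) rules out $\int f^{\alpha}=\infty$ when all $\int f_{\ell}^{\alpha}$ are finite.
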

\begin{proof}
  For $0 < \alpha < 1$ this follows by the concavity of R\'enyi
  entropy. Consider now $\alpha>1$:
  \begin{IEEEeqnarray*}{rCl}
    \log \int f^{\alpha}(\vect{x}) \dd \vect{x}
    & = &  \log \int \biggl( \sum_{\ell=1}^p q_{\ell}
    f_{\ell}(\vect{x})\biggr)^{\alpha} \dd\vect{x}
    \\
    & \leq &  \log \int \sum_{\ell=1}^p q_{\ell}
    f_{\ell}^{\alpha}(\vect{x}) \dd\vect{x}
    \\
    & = &  \log \left(  \sum_{\ell=1}^p q_{\ell} \int
      f_{\ell}^{\alpha}(\vect{x}) \dd\vect{x} \right)
    \\ 
    & \leq &  \log \max_{1 \le \ell \le p} \int
    f_{\ell}^{\alpha}(\vect{x}) \dd\vect{x}
    \\
    & = &  \max_{1 \le \ell \le p} \log \int
    f_{\ell}^{\alpha}(\vect{x}) \dd\vect{x},
  \end{IEEEeqnarray*}  
  from which the claim follows because $1/(1-\alpha)$ is
  negative. Here the first inequality follows from the convexity of
  the mapping $\xi \mapsto \xi^{\alpha}$ (for $\alpha > 1$), and the second
  inequality follows by upper-bounding the average by the maximum.
\end{proof}

We next turn to upper bounds.
\begin{lemma}
  \label{lem:MixtureUpper}
Consider the setup of Lemma~\ref{lem:minimaljensen}. 
\begin{enumerate}
\item If $\alpha > 1$ then
  \begin{equation}
    \label{eq:mix_up_1}
    h_{\alpha}(f) \le \min_{1\le \ell \le p} \Bigl\{\frac{\alpha}{1-\alpha}
    \log q_{\ell} + h_{\alpha}(f_{\ell}) \Bigr\}.
  \end{equation}
\item If $0 < \alpha < 1$ then
  \begin{equation}
    \label{eq:mix_up_2}
    h_{\alpha}(f) \le \frac{1}{1-\alpha} \log p + \max_{1 \le \ell \le p}
    h_{\alpha}(f_{\ell}). 
  \end{equation}
\end{enumerate}
\end{lemma}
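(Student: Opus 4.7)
My plan is to handle the two parts of Lemma~\ref{lem:MixtureUpper} by exploiting, in each case, an elementary pointwise inequality between $f^{\alpha}$ and the individual $f_{\ell}^{\alpha}$, and then integrating and applying the monotonic transformation $\xi \mapsto \frac{1}{1-\alpha}\log \xi$ (which is decreasing for $\alpha > 1$ and increasing for $0<\alpha<1$, a sign distinction that drives the difference between the two statements).

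For part~(1), with $\alpha > 1$, the key observation is that since the other summands in $\sum_{\ell'} q_{\ell'} f_{\ell'}$ are nonnegative, one has the pointwise domination $f(\vect{x}) \geq q_{\ell}f_{\ell}(\vect{x})$ for every $\ell$. Because $\xi \mapsto \xi^{\alpha}$ is nondecreasing on $[0,\infty)$, this upgrades to $f^{\alpha}(\vect{x}) \geq q_{\ell}^{\alpha} f_{\ell}^{\alpha}(\vect{x})$, and integrating over $\reals^{n}$ gives $\int f^{\alpha} \geq q_{\ell}^{\alpha}\int f_{\ell}^{\alpha}$. Taking logarithms and multiplying by the negative constant $\frac{1}{1-\alpha}$ reverses the inequality and yields $h_{\alpha}(f) \leq \frac{\alpha}{1-\alpha}\log q_{\ell} + h_{\alpha}(f_{\ell})$; minimizing the right-hand side over $\ell$ then produces~\eqref{eq:mix_up_1}. (The degenerate case $q_{\ell}=0$ makes the bound vacuous for that $\ell$ and can be dropped from the minimization, and if $\int f_{\ell}^{\alpha} = \infty$ the same pointwise bound forces $\int f^{\alpha} = \infty$ whenever $q_{\ell}>0$, so the inequality still holds in the extended sense.)

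For part~(2), with $0<\alpha<1$, I would instead use the subadditivity of $\xi \mapsto \xi^{\alpha}$ on $[0,\infty)$, namely $(\sum_{\ell} a_{\ell})^{\alpha} \leq \sum_{\ell} a_{\ell}^{\alpha}$ for nonnegative $a_{\ell}$. Applied pointwise to the mixture this gives
\begin{equation*}
f^{\alpha}(\vect{x}) \;\leq\; \sum_{\ell=1}^{p} q_{\ell}^{\alpha} f_{\ell}^{\alpha}(\vect{x}) \;\leq\; \sum_{\ell=1}^{p} f_{\ell}^{\alpha}(\vect{x}),
\end{equation*}
where the second step just uses $q_{\ell} \leq 1$ and $\alpha>0$. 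Integrating and bounding the sum by $p$ times its maximum term gives $\int f^{\alpha} \leq p \, \max_{\ell} \int f_{\ell}^{\alpha}$. Since $\frac{1}{1-\alpha}>0$, taking $\frac{1}{1-\alpha}\log$ preserves the inequality and turns the product into the sum $\frac{1}{1-\alpha}\log p + \max_{\ell} h_{\alpha}(f_{\ell})$, which is exactly~\eqref{eq:mix_up_2}.

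There is really no hard step here; both parts reduce to the correct elementary inequality for $\xi^{\alpha}$ on $[0,\infty)$ in the relevant regime, together with careful bookkeeping of the sign of $\frac{1}{1-\alpha}$. The only thing worth flagging is that one must check the bounds remain valid when some of the $h_{\alpha}(f_{\ell})$ take infinite values, but in each case the pointwise inequality propagates the correct direction of infinity through the integral, so the conclusions hold in the extended real sense.
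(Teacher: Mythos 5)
Your proof is correct. Part~(1) matches the paper's argument exactly: pointwise domination $f(\vect{x}) \ge q_{\ell} f_{\ell}(\vect{x})$, raise to the power $\alpha$, integrate, take logs, and flip the direction when multiplying by the negative factor $1/(1-\alpha)$, then minimize over $\ell$. Part~(2) reaches the same intermediate pointwise bound $f^{\alpha}(\vect{x}) \le \sum_{\ell} f_{\ell}^{\alpha}(\vect{x})$ that the paper uses, but by a slightly different elementary step: you invoke the subadditivity of $\xi \mapsto \xi^{\alpha}$ on $[0,\infty)$ for $0<\alpha<1$ (a concavity fact) to get $f^{\alpha} \le \sum_{\ell} q_{\ell}^{\alpha} f_{\ell}^{\alpha}$ and then drop the factors $q_{\ell}^{\alpha} \le 1$, whereas the paper first bounds the convex combination $\sum_{\ell} q_{\ell} f_{\ell}$ by the pointwise $\max_{\ell} f_{\ell}$, raises to the power $\alpha$, and only then bounds the max by the sum. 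Both routes are short and correct; yours bypasses the pointwise-max detour at the cost of needing subadditivity, and both finish identically by integrating, bounding the sum of integrals by $p \max_{\ell} \int f_{\ell}^{\alpha}$, and applying $\frac{1}{1-\alpha}\log(\cdot)$ with $1-\alpha > 0$. Your remarks on the degenerate cases ($q_{\ell}=0$ making the part-(1) bound vacuous, and infinite $\int f_{\ell}^{\alpha}$ propagating correctly) are also sound.
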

\begin{proof}
  We begin with the case where $\alpha > 1$. Since the densities and
  weights are nonnegative,
  \begin{equation}
    \biggl( \sum_{\ell=1}^p q_{\ell}
    f_{\ell}(\vect{x})\biggr)^{\alpha} \geq \bigl( q_{\ell'}
    f_{\ell'}(\vect{x})\bigr)^{\alpha}, \quad \ell' \in \{1, \ldots, p\}.
  \end{equation}
  Integrating this inequality; taking logarithms, and dividing by
  $1-\alpha$ (which is negative) we obtain
  \begin{equation}
    h_{\alpha}(f) \le \frac{\alpha}{1-\alpha}
  \log q_{\ell'} + h_{\alpha}(f_{\ell'}), \quad \ell' \in \{1, \ldots, p\}.
  \end{equation}
  Since this holds for every $\ell' \in \{1, \ldots, p\}$, we can
  minimize over $\ell'$ to obtain~\eqref{eq:mix_up_1}.

  We next turn to the case where $0 < \alpha < 1$. 
  \begin{align*}
    \log \int \Bigl( \sum_{\ell=1}^{p} q_{\ell} f_{\ell}(\bfx)
    \Bigr)^{\alpha} \dd{\bfx} & \leq \log \int \max_{1 \leq \ell \leq p}
    f_{\ell}^{\alpha}(\bfx) \dd{\bfx} \\
    & \leq \log \int \sum_{\ell=1}^{p} f_{\ell}^{\alpha}(\bfx)
    \dd{\bfx} \\
    & = \log \sum_{\ell=1}^{p} \int f_{\ell}^{\alpha}(\bfx) \dd{\bfx} \\
    & \leq \log \biggl( p \max_{1 \leq \ell \leq p} \int
    f_{\ell}^{\alpha}(\bfx) \biggr) \dd{\bfx} \\
    & = \log p + \log \max_{1 \leq \ell \leq p} \int
    f_{\ell}^{\alpha}(\bfx) \dd{\bfx} \\
    & = \log p + \max_{1 \leq \ell \leq p} \log \int
    f_{\ell}^{\alpha}(\bfx) \dd{\bfx}.
  \end{align*}
  Dividing this inequality by $1-\alpha$ (positive) yields~\eqref{eq:mix_up_2}.
\end{proof}

\subsection{Bounded Densities}

\begin{proposition}
  \label{prop:RenyiOfBoundedDensity}
  If a density $f$ is bounded, and if $\alpha > 1$, then
  $h_{\alpha}(f) > -\infty$. 
\end{proposition}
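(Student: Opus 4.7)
The plan is to bound $\int f^{\alpha}$ from above by a finite quantity and then invoke the definition of $h_{\alpha}$, noting that $1/(1-\alpha)$ is negative for $\alpha>1$ so an upper bound on the integral yields a lower bound on the entropy.

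Concretely, let $M$ denote a finite upper bound on $f$, so that $f(x) \le M$ for (Lebesgue-almost) every $x$. The first step is the pointwise inequality
\begin{equation*}
  f^{\alpha}(x) = f^{\alpha-1}(x) \cdot f(x) \le M^{\alpha-1} f(x),
\end{equation*}
which is valid because $\alpha-1>0$ and $f\ge 0$. The second step is to integrate this over $\Reals$, using that $f$ is a density, to obtain $\int f^{\alpha}(x)\dd x \le M^{\alpha-1}<\infty$. The third step is to take logarithms, giving $\log \int f^{\alpha}(x)\dd x \le (\alpha-1)\log M$, and then to divide by $1-\alpha<0$ (which reverses the inequality), yielding
\begin{equation*}
  h_{\alpha}(f) = \frac{1}{1-\alpha}\log \int f^{\alpha}(x)\dd x \ge -\log M > -\infty.
\end{equation*}

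There is essentially no obstacle here: the argument is a one-line consequence of $f^{\alpha}\le M^{\alpha-1}f$ combined with the sign of $1/(1-\alpha)$. The only mild point worth noting is that we should allow $M$ to be any essential upper bound (since the definition of $h_{\alpha}$ depends on $f$ only through its integrals), but this does not affect the argument. No appeal is needed to any of the earlier lemmas in this section; the statement is a direct computation from the definition \eqref{eq:renyi}.
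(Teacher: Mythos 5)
Your argument is correct and essentially identical to the paper's own proof: both start from the pointwise inequality $f^{\alpha}(x)=f^{\alpha-1}(x)f(x)\le M^{\alpha-1}f(x)$, integrate to obtain $\int f^{\alpha}\le M^{\alpha-1}<\infty$, and then use that $1/(1-\alpha)<0$ to conclude. You additionally make the explicit lower bound $h_{\alpha}(f)\ge -\log M$ visible, which the paper leaves implicit, but this is a cosmetic difference only.
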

\begin{proof}
  Let $f$ be a density that is upper-bounded by the constant
  $\mathsf{M}$ (which must therefore be positive), and suppose that
  $\alpha > 1$. In this case
  \begin{align*}
    f^{\alpha}(x) & = f^{\alpha - 1}(x) \> f(x) \\
    & \leq \mathsf{M}^{\alpha - 1} \> f(x),
  \end{align*}
  because $\xi \mapsto \xi^{\alpha - 1}$ is monotonically increasing
  when $\alpha > 1$. Integrating over $x$ we obtain
  \begin{equation*}
    \int f^{\alpha}(x) \dd{x} \leq \mathsf{M}^{\alpha - 1} < \infty.
  \end{equation*}
  Since $\alpha > 1$, this implies that 
  \begin{equation*}
    \frac{1}{1-\alpha} \log \intinf f^{\alpha}(x) \dd{x} > -\infty. \qedhere
  \end{equation*}
\end{proof}

The following proposition, which is proved in
Appendix~\ref{app:CanAssume}, demonstrates that $h^{\star}$ can be
approached by bounded densities.
\begin{proposition}
  \label{prop:CanAssume}
  Suppose that $\Gamma \in (\Gamma_{0}, \infty)$, where $\Gamma_{0}$
  satisfies~\eqref{block:amos}.
  Then for every $\delta > 0$ there exists some bounded density
  $f^{\star}$ supported by $\set{S}$ such that
  \begin{subequations}
  \begin{equation}
    \label{eq:14Prop}
    \int f^{\star}(x) \> r(x) \dd x < \Gamma + \delta,
  \end{equation}
  \begin{equation}
    \label{eq:13Prop}
    h(f^{\star}) > h^{\star}(\Gamma) - \delta.
  \end{equation}
  \end{subequations} 
\end{proposition}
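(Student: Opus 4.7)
The plan is to approximate $f^{\star}$ by truncating the values of an almost-optimal density for $h^{\star}(\Gamma)$ at a large threshold $M$. First I would use the definition of $h^{\star}(\Gamma)$ as a supremum, together with the finiteness established in Proposition~\ref{prop:HstarProperties}, to pick a density $g$ supported on $\set{S}$ with $\int g(x)\,r(x)\dd x \leq \Gamma$ and $h(g) > h^{\star}(\Gamma) - \delta/2$. Since $h(g)$ is then a finite real number, I interpret it in the standard Lebesgue sense, so that $\int g\,\abs{\log g}\dd x < \infty$; likewise, I take the cost $\int g\,\abs{r}\dd x$ to be finite so the constraint is unambiguously defined.

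Next, I would set $g_M(x) = \min\{g(x), M\}$, $c_M = \int g_M \dd x$, and $f_M = g_M/c_M$, so $f_M$ is a density supported on $\set{S}$ bounded by $M/c_M$. Since $g_M \uparrow g$ pointwise as $M \to \infty$, the Monotone Convergence Theorem yields $c_M \uparrow 1$, and dominated convergence (with dominant $g\abs{r}$) gives $\int g_M\,r\dd x \to \int g\,r\dd x \leq \Gamma$, so $\int f_M\,r\dd x < \Gamma + \delta$ for all sufficiently large $M$. Everything then hinges on proving that $h(f_M) \to h(g)$, after which taking $M$ large and setting $f^{\star} = f_M$ would complete the argument.

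For the entropy convergence I would expand
\[
  h(f_M) = -\frac{1}{c_M} \int g_M \log g_M \dd x + \log c_M,
\]
and split the integral over $\{g \leq M\}$ and $\{g > M\}$. On the first set, $g_M \log g_M = g \log g$, and dominated convergence with dominant $g\abs{\log g}$ sends $\int_{\{g \leq M\}} g \log g\dd x$ to $\int g \log g \dd x = -h(g)$. On the second set, $g_M \log g_M = M \log M$, and I would exploit the pointwise bound $g \log g \geq M \log M$ (valid once $M \geq 1$, since there $g > M \geq 1$) to control the term $M \log M \cdot \card{\{g > M\}}$ by $\int_{\{g > M\}} g \log g\dd x$, and then apply dominated convergence once more to send this last integral to zero. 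Combined with $c_M \to 1$, this yields $h(f_M) \to h(g) > h^{\star}(\Gamma) - \delta$ for all $M$ sufficiently large.

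The main obstacle is the integrability of $g\abs{\log g}$, which is the crucial hypothesis for both applications of dominated convergence in the entropy step. This is encoded in the assumption that $h(g)$ exists as a finite real number in the standard Lebesgue sense; if a weaker convention for $h$ were in force, one would instead choose the near-optimizer $g$ more carefully (for instance with compact support, or with tails dominated by some Gaussian or exponential) to ensure absolute integrability of $g \log g$ from the outset. All other steps in the plan are routine applications of the monotone and dominated convergence theorems.
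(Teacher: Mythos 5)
Your height-truncation strategy is essentially the paper's Lemma~\ref{lem:anotherlemma}, and your dominated-convergence argument for $h(f_M)\to h(g)$ is correct once one grants $\int g\,\abs{\log g}\,\dd x<\infty$. That integrability is actually automatic under the paper's convention: $h(g)$ being a finite real means the Lebesgue integral $\int g\log(1/g)$ exists and is finite, so both its positive and negative parts are finite. Hence the reservation you raise at the end (about needing a ``weaker convention for $h$'') is unfounded; there is no obstacle there. (The paper's own entropy estimate is slightly more elementary---it notes $\xi\mapsto\xi\log\xi^{-1}$ is decreasing for $\xi>1$, so truncating at height $M\ge 1$ cannot decrease $\int(f\wedge M)\log\frac{1}{f\wedge M}$ below $h(f)$---which gives the required one-sided bound without a limit argument, but either route works.)

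The genuine gap is on the cost side. You simply ``take'' $\int g\,\abs{r}\,\dd x$ to be finite, but this is not implied by the constraint: $\E{r(X)}\le\Gamma$ is meaningful and satisfied when $\int g\,r^+\,\dd x<\infty$ but $\int g\,r^-\,\dd x=\infty$, i.e.\ when $\int g\,r\,\dd x=-\infty$, and a near-optimizer of $h^\star(\Gamma)$ can be of this type. In that case your dominated-convergence step for $\int f_M\,r\to\int g\,r$ has no integrable dominant; moreover, truncating by \emph{height} leaves $g$ unchanged wherever $g<M$ (in particular in the tails, which is where $r^-$ typically blows up), so $\int g_M\,\abs{r}$ can remain infinite for every $M$. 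The paper addresses exactly this case in Lemma~\ref{lem:onemore}, by first restricting the \emph{support} to $\set{D}_k=\{x:r^-(x)\le k\}$ to force $\int\tilde f_k\,\abs{r}<\infty$ while keeping the entropy close and sending the cost to $-\infty$, and only then invoking the height-truncation lemma. Your proof could be repaired more cheaply---since $\int g_M r^+\le\int g\,r^+<\infty$ while $\int g_M r^-\uparrow\infty$ by the MCT, $\int f_M\,r$ eventually falls below $\Gamma+\delta$ without any DCT---but as written the assumption is unjustified and the missing case is precisely what motivates half of the paper's appendix.
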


\subsection{The Marginals of the Uniform Density on $\weaktyp(f)
  \cap \contyp{f}$}

\begin{lemma}
  \label{lem:RenThenFin}
  Let $f^{\star}$ be a density on $\set{S}$ having finite order-$\alpha$
  R\'enyi entropy
\begin{equation}
  h_{\alpha}(f^{\star}) > -\infty
\end{equation}
for some
\begin{equation}
  \label{eq:alpha_big1}
  \alpha > 1
\end{equation}
and satisfying~\eqref{eq:Finite_Shannon} and~\eqref{eq:r_int}.
For every $n \in \Naturals$, let
$(X_{1}, \ldots, X_{n})$ be drawn uniformly from the set
$\weaktyp(f^{\star}) \cap \contyp{f^{\star}}$, where $\eps$ is some fixed
positive number. Then for every
sufficiently large~$n$ the following holds: for any $\rho \in \{1,
\ldots, n\}$ the $\rho$-tuple $(X_{1}, \ldots, X_{\rho})$ has finite
order-$\alpha$ R\'enyi entropy
\begin{equation}
  \label{eq:first_rho_finite}
  h_{\alpha}(X_1, \ldots, X_{\rho}) > -\infty, \quad \Bigl( \rho \in \{1,
  \ldots, n\}, \; \alpha > 1 \Bigr).
\end{equation}
\end{lemma}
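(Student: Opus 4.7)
The plan is to show that for $\alpha>1$, having $h_\alpha > -\infty$ is equivalent to $\int g^\alpha < \infty$ (since $1/(1-\alpha)<0$), so it suffices to dominate the marginal density of $(X_1,\ldots,X_\rho)$ pointwise by a constant times the product density $\prod_{k=1}^\rho f^\star(x_k)$, which already has a finite $\alpha$-integral.

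Let $\set{A}_n = \weaktyp(f^\star) \cap \contyp{f^\star}$, set $N = \card{\set{A}_n}$, and denote the uniform density on $\set{A}_n$ by $g_n$. For $n$ large, \eqref{eq:Pre_card_lb} gives the lower bound $N \geq (1-\eps)\,2^{n(h(f^\star)-\eps)}$. For fixed $x_1^\rho$, let $\set{B}(x_1^\rho) = \{x_{\rho+1}^n \in \set{S}^{n-\rho} \colon x_1^n \in \set{A}_n\}$ be the corresponding section; then the marginal density of $(X_1,\ldots,X_\rho)$ is $g_\rho(x_1^\rho) = \card{\set{B}(x_1^\rho)}/N$.

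The main step is to upper-bound $\card{\set{B}(x_1^\rho)}$. By the defining property of $\weaktyp(f^\star)$, every $x_1^n \in \set{A}_n$ satisfies $\prod_{k=1}^n f^\star(x_k) \geq 2^{-n(h(f^\star)+\eps)}$, so for every $x_{\rho+1}^n \in \set{B}(x_1^\rho)$,
\begin{equation*}
  \prod_{k=\rho+1}^n f^\star(x_k) \geq \frac{2^{-n(h(f^\star)+\eps)}}{\prod_{k=1}^\rho f^\star(x_k)}.
\end{equation*}
Integrating this uniform lower bound against Lebesgue measure on $\set{B}(x_1^\rho)$ and comparing with $\int \prod_{k=\rho+1}^n f^\star(x_k)\dd x_{\rho+1}^n = 1$ yields
\begin{equation*}
  \card{\set{B}(x_1^\rho)} \leq 2^{n(h(f^\star)+\eps)} \prod_{k=1}^\rho f^\star(x_k).
\end{equation*}
Combining with the lower bound on $N$ gives, for all large $n$,
\begin{equation*}
  g_\rho(x_1^\rho) \leq \frac{2^{2n\eps}}{1-\eps} \prod_{k=1}^\rho f^\star(x_k) \eqdef C_n \prod_{k=1}^\rho f^\star(x_k),
\end{equation*}
which is the key pointwise domination.

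With this in hand, the conclusion is immediate. Raising to the $\alpha$-th power and integrating,
\begin{equation*}
  \int g_\rho^\alpha(x_1^\rho)\dd x_1^\rho \leq C_n^\alpha \left(\int (f^\star)^\alpha(x)\dd x\right)^\rho,
\end{equation*}
and the integral on the right is finite because $h_\alpha(f^\star) > -\infty$ together with $\alpha>1$ forces $\int (f^\star)^\alpha < \infty$. Therefore $\int g_\rho^\alpha < \infty$, which, since $1/(1-\alpha)$ is negative, translates to $h_\alpha(X_1,\ldots,X_\rho) > -\infty$, as required. The only slightly delicate point is the bound on $\card{\set{B}(x_1^\rho)}$, but it follows directly from the lower bound built into the definition of weak typicality; no property of $\contyp{f^\star}$ is needed for this step beyond the fact that $\set{A}_n \subseteq \weaktyp(f^\star)$.
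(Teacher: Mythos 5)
Your proof is correct and essentially mirrors the paper's: both derive the same pointwise domination $f_n(x_1^\rho)\le \tfrac{2^{2n\eps}}{1-\eps}\prod_{k=1}^\rho f^\star(x_k)$ from the cardinality lower bound~\eqref{eq:Pre_card_lb} and the weak-typicality lower bound on $\prod_k f^\star(x_k)$, then conclude via $\int (f^\star)^\alpha<\infty$. The only difference is bookkeeping — the paper bounds the full uniform density by $\tfrac{2^{2n\eps}}{1-\eps}\,q_n$ and marginalizes by integrating, whereas you compute the marginal directly as a section-cardinality ratio and bound numerator and denominator separately — but these are the same estimate.
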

\begin{proof}
Denote the uniform density over $\weaktyp(f^{\star}) \cap \contyp{f^{\star}}$ by
$f_n$, and let $q_{n}$ be the product density
\begin{equation}
  \label{eq:def_qn}
  q_{n}(\bfx) = \prod_{k=1}^n f^{\star}(x_k), \quad \bfx \in \set{S}^{n}.
\end{equation}
Henceforth let $n$ be sufficiently large for \eqref{eq:Pre_card_lb} to
hold. Consequently,
\begin{IEEEeqnarray}{c}
  f_n(\vect{x}) \le \frac{1}{1-\eps} \> 2^{-n(h(f^{\star})-\epsilon)},
  \quad \bfx \in \set{S}^{n}.
\end{IEEEeqnarray}
Using this inequality and the definition in~\eqref{eq:1} of
$\weaktyp(f^{\star})$, we can upper-bound $f_{n}$ in terms of $q_{n}$
for tuples in $\weaktyp(f^{\star})$:
\begin{IEEEeqnarray}{c}
  \label{eq:ub_fn_using_qn}
  f_n(\vect{x}) \le \frac{1}{1-\eps} \> 2^{2n\epsilon} \> q_n(\vect{x}), 
  \quad \vect{x}\in \weaktyp(f^{\star}).
\end{IEEEeqnarray}
For every $\rho \in \{1, \ldots, n\}$ we can obtain the density
$f_n(x_1, \ldots, x_{\rho})$ of $(X_{1}, \ldots, X_{\rho})$ by
integrating $f_{n}(x_{1}, \ldots, x_{n})$ over $x_{\rho+1}, \ldots,
x_{n}$:
\begin{IEEEeqnarray*}{rCl}
  \IEEEeqnarraymulticol{3}{l}{%
    f_n(x_1, \ldots, x_{\rho}) 
  }\nonumber\\*
  & = & \int_{x_{\rho+1}, \ldots, x_n} f_n(\vect{x}) \I{\vect{x}\in\weaktyp(f^{\star})
    \cap \contyp{f^{\star}}} \dd x_{\rho+1} \cdots \dd x_n \\
  & \le & \frac{1}{1-\eps}\> 2^{2n\eps} 
   \!\! \int 
   \! q_n(\vect{x}) \I{\vect{x}\in\weaktyp(f^{\star})
     \cap \contyp{f^{\star}}} \dd x_{\rho+1} \cdots \dd x_n
   \\
   & \le & \frac{1}{1-\eps}\> 2^{2n\eps} 
   \!\! \int
   \! q_n(\vect{x}) \dd x_{\rho+1} \cdots  \dd x_n 
   \\ 
   & = & \frac{1}{1-\eps}\> 2^{2n\eps} 
   f^{\star}(x_1) \cdots f^{\star}(x_{\rho}), \quad x_1, \ldots,
   x_{\rho} \in \set{S},
   \IEEEyesnumber
   \IEEEeqnarraynumspace
   \label{eq:232}
\end{IEEEeqnarray*}
where $\I{\cdot}$ denotes the indicator function, and the first
inequality follows from~\eqref{eq:ub_fn_using_qn}; the second by
increasing the range of integration; and the final equality follows
from \eqref{eq:def_qn}.

Using~\eqref{eq:232} we can now lower-bound $h_{\alpha}(X_1, \ldots,
X_{\rho})$ as follows. If a density $f$ is upper-bounded by
$\mathsf{K} g$, where $g$ is some other density and $\mathsf{K}$ is
some positive constant, and if $\alpha > 1$, then
\begin{IEEEeqnarray*}{rCl}
  h_{\alpha}(f) & = & \frac{1}{1-\alpha} \log \int
  f^{\alpha}(\vect{x}) \dd \vect{x}
  \\
  & \ge & \frac{1}{1-\alpha} \log \int \mathsf{K}^{\alpha}
  g^{\alpha}(\vect{x}) \dd \vect{x}
  \\
  & = & \frac{\alpha}{1-\alpha}\log \mathsf{K} + h_{\alpha}(g),
  \IEEEyesnumber
  \IEEEeqnarraynumspace
  \label{eq:399}
\end{IEEEeqnarray*}
where the inequality 
holds because $\alpha > 1$ so the pre-log is negative. Using this and
\eqref{eq:232} we obtain
\begin{align*}
  h_{\alpha}(X_1, \ldots, X_{\rho}) & \ge  \frac{\alpha}{1-\alpha}
  \log \biggl(\frac{1}{1-\eps}\> 2^{2n\eps} \biggr) + \rho
  h_{\alpha}(f^{\star}) \\
  & > -\infty. \qedhere
\end{align*}
\end{proof}

\section{Proofs of Theorems~\ref{thm:AlphaBig}
  and~\ref{thm:AlphaSmall}}
\label{sec:Proofs1}

The following proposition is useful for stationarization.
\begin{proposition}
  \label{prop:stationarization}
  Let $f_{n}$ be some density on $\set{S}^{n}$ having order-$\alpha$
  R\'enyi entropy $h_{\alpha}(f_{n})$ and satisfying
  \begin{equation}
    \label{eq:constN}
    \sum_{k=1}^{n}\E{r(X_{k})} \leq n\Gamma, \quad (X_{1}, \ldots, X_{n}) \sim f_{n}.
  \end{equation}
  Then there exists a stationary
  SP $\DTSPgen{Z}{k}$ satisfying~\eqref{eq:AAconstSP}
  for which the following holds:
  \begin{itemize}
  \item If 
    \begin{multline}
      \label{eq:RhoRhoprimeBig}
      h_{\alpha}(X_{1}, \ldots, X_{\rho}), h_{\alpha}(X_{n-\rho'+1},
      \ldots, X_{n}) > -\infty, \\
      \quad \rho,\rho' \in \{1, \ldots, n-1\},
    \end{multline}
    whenever $(X_{1}, \ldots, X_{n}) \sim f_{n}$
    and 
    $\rho,\rho' \in \{1, \ldots, n-1\}$, then
    \begin{equation}
      \label{eq:liminfHZ}
      \varliminf_{m \to \infty} \frac{1}{m} h_{\alpha}(Z_{1}, \ldots, Z_{m}) \geq
      \frac{1}{n} h_{\alpha}(f_{n}).
    \end{equation}
  \item If 
    \begin{multline}
      \label{eq:RhoRhoprimeSmall}
      h_{\alpha}(X_{1}, \ldots, X_{\rho}), h_{\alpha}(X_{n-\rho'+1},
      \ldots, X_{n}) < +\infty, \\
      \quad \rho,\rho' \in \{1, \ldots, n-1\},
    \end{multline}
     whenever $(X_{1}, \ldots, X_{n}) \sim f_{n}$
    and 
    $\rho,\rho' \in \{1, \ldots, n-1\}$, then
    \begin{equation}
      \label{eq:limsupHZ}
      \varlimsup_{m \to \infty} \frac{1}{m} h_{\alpha}(Z_{1}, \ldots, Z_{m}) \leq
      \frac{1}{n} h_{\alpha}(f_{n}).
    \end{equation}
    \item And if both \eqref{eq:RhoRhoprimeBig} and
      \eqref{eq:RhoRhoprimeSmall} hold, then
      \begin{equation}
          \label{eq:limHZ} 
          \lim_{m \to \infty} \frac{1}{m} h_{\alpha}(Z_{1}, \ldots,
          Z_{m}) = 
          \frac{1}{n} h_{\alpha}(f_{n}).
      \end{equation}
  \end{itemize}
\end{proposition}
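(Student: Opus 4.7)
The plan is to construct $\DTSPgen{Z}{k}$ via the classical ``independent blocks plus uniform random cyclic shift'' recipe. Let $\bigl(X_1^{(j)}, \ldots, X_n^{(j)}\bigr)$, $j \in \Integers$, be drawn IID with joint density $f_n$; let $Y_i$ denote their concatenation, so that $Y_{(j-1)n + \ell} = X_\ell^{(j)}$ for $\ell \in \{1, \ldots, n\}$; let $U$ be independent of $\{Y_i\}$ and uniformly distributed on $\{0, 1, \ldots, n-1\}$; and set $Z_k = Y_{k+U}$ for $k \in \Integers$. Stationarity follows from the IID block structure combined with the uniformity of $U$ modulo $n$, and~\eqref{eq:AAconstSP} holds because under the random shift the marginal law of $Z_k$ is the uniform mixture of the $n$ single-coordinate marginals of $f_n$, giving $\E{r(Z_k)} = \frac{1}{n}\sum_{j=1}^{n}\E{r(X_j)} \leq \Gamma$ by hypothesis~\eqref{eq:constN}, where $(X_1,\ldots,X_n)\sim f_n$.

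Next I would write the density $g_m$ of $(Z_1, \ldots, Z_m)$ as the uniform mixture $g_m = \frac{1}{n}\sum_{u=0}^{n-1} g_{m \mid u}$ indexed by the value of $U$. Conditioned on $U = u$, the block structure makes $(Z_1, \ldots, Z_m)$ factor into mutually independent pieces: possibly an initial partial block of length $\rho_u' \in \{0, \ldots, n-1\}$ distributed as the marginal of $f_n$ on its last $\rho_u'$ coordinates; a number $k_u = m/n + O(1)$ of full blocks each with law $f_n$; and possibly a final partial block of length $\rho_u \in \{0, \ldots, n-1\}$ distributed as the marginal of $f_n$ on its first $\rho_u$ coordinates. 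Additivity of R\'enyi entropy across independent components then gives
\begin{equation*}
    h_{\alpha}(g_{m \mid u}) \;=\; h_{\alpha}(\text{initial partial}) + k_u\, h_{\alpha}(f_n) + h_{\alpha}(\text{final partial}).
\end{equation*}

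To establish~\eqref{eq:liminfHZ}, I would invoke Lemma~\ref{lem:minimaljensen} to obtain $h_{\alpha}(g_m) \geq \min_u h_{\alpha}(g_{m \mid u})$. Hypothesis~\eqref{eq:RhoRhoprimeBig} ensures both partial-block entropies are $>-\infty$, so the displayed decomposition is well-defined in the extended reals; since $k_u/m \to 1/n$ and the two partial-block terms are $O(1)$, dividing by $m$ and taking $\liminf$ yields $\tfrac{1}{n}h_{\alpha}(f_n)$. For~\eqref{eq:limsupHZ}, I would apply Lemma~\ref{lem:MixtureUpper}: for $\alpha > 1$, $h_{\alpha}(g_m) \leq \tfrac{\alpha\log n}{\alpha-1} + \min_u h_{\alpha}(g_{m \mid u})$; for $0 < \alpha < 1$, $h_{\alpha}(g_m) \leq \tfrac{\log n}{1-\alpha} + \max_u h_{\alpha}(g_{m \mid u})$. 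Hypothesis~\eqref{eq:RhoRhoprimeSmall} keeps the partial-block entropies $<+\infty$, and the same asymptotic accounting yields~\eqref{eq:limsupHZ} after the additive $\log n$ term is absorbed by the factor $1/m$. The identity~\eqref{eq:limHZ} is the conjunction of the two one-sided bounds.

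The main technical obstacle is the bookkeeping for the boundary partial blocks. Full blocks contribute exactly $k_u\,h_{\alpha}(f_n)$ with $k_u/m \to 1/n$, so everything hinges on showing the boundary contributions are benign. Hypotheses~\eqref{eq:RhoRhoprimeBig} and~\eqref{eq:RhoRhoprimeSmall} are tailored precisely to this end: they rule out the ``wrong-sign infinity'' that would render the decomposition of $h_{\alpha}(g_{m\mid u})$ indeterminate, so after division by $m$ the $O(1)$ boundary terms vanish and only the limit $\tfrac{1}{n}h_{\alpha}(f_n)$ survives.
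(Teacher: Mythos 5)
Your proposal is correct and follows essentially the same route as the paper: the identical block-plus-random-shift construction, the same decomposition of $(Z_1,\ldots,Z_m)$ conditioned on the shift into boundary partial blocks and interior full blocks, and the same use of Lemma~\ref{lem:minimaljensen} for the $\liminf$ and Lemma~\ref{lem:MixtureUpper} for the $\limsup$. The careful remark about the hypotheses~\eqref{eq:RhoRhoprimeBig} and~\eqref{eq:RhoRhoprimeSmall} ruling out wrong-sign infinities in the additive decomposition is exactly the role they play in the paper's argument.
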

\begin{proof}
  Consider first the (nonstationary) SP $\{Y_{k}\}$
  that we construct by drawing
  \begin{equation*}
    \ldots, Y_{-n+1}^{0}, Y_{1}^{n}, Y_{n+1}^{2n}, \ldots \sim
    \text{IID $f_{n}$}.
  \end{equation*}
  To stationarize it, let $T$ be drawn uniformly over $\{0, \ldots,
  n-1\}$ independently of $\{Y_{k}\}$, and define the stationary SP
  \begin{equation}
    Z_{k} = Y_{k + T}, \quad k \in \Integers.
  \end{equation}
  It satisfies~\eqref{eq:AAconstSP}. Consider now any $m$ larger than
  $2n$, and express $Z_{1}^{m}$ in one of two different way depending
  on whether $T$ is zero or not. For $T=0$ 
  \begin{equation}
      \label{eq:Z1tomAXmas}
      Z_{1}^{m} = 
      \underbrace{Y_{1}^{n}, \ldots, 
        Y_{\tilde{\nu} n -n + 1}^{\tilde{\nu} n}}_{\text{$\tilde{\nu} = \lfloor m/n
          \rfloor$ $n$-tuples}}, \underbrace{Y_{\tilde{\nu} n + 1},
          \ldots, Y_{m}}_{\text{$\tilde{\rho} = m - n \lfloor m/n
          \rfloor$ terms}}
    \end{equation}
    where
  \begin{subequations}
    \label{block:Z1tomBXmas}
    \begin{equation}
      \label{eq:Z1tomBXmas2}
      \tilde{\nu} = \left\lfloor \frac{m}{n} \right\rfloor, 
      \end{equation}
      \begin{equation}
        \label{eq:Z1tomBXmas3}
        \tilde{\rho} = m - n \left\lfloor \frac{m}{n}
        \right\rfloor \in \{0, \ldots, n-1\}.
    \end{equation}
  \end{subequations}
  And for $T \in \{1, \ldots, n-1\}$
    \begin{multline}
      \label{eq:Z1tomA}
      Z_{1}^{m} = \\
      \underbrace{Y_{T+1}, \ldots, Y_{n}}_{\text{$\rho' = n-T$
          terms}}, \underbrace{Y_{n+1}^{2n}, \ldots, Y_{\nu n + 1}^{(\nu +
          1) n}}_{\text{$\nu$ $n$-tuples}}, \underbrace{Y_{(\nu+1)n + 1},
        \ldots, Y_{m+T}}_{\text{$\rho$ terms}}
    \end{multline}
    where
  \begin{subequations}
    \label{block:Z1tomB}
    \begin{equation}
      \label{eq:Z1tomB1}
      \rho' = n - T \in \{1, \ldots, n-1\},
    \end{equation}
    \begin{equation}
      \label{eq:Z1tomB2}
      \nu = \left\lfloor \frac{m - n + T}{n} \right\rfloor, 
      \end{equation}
      \begin{equation}
        \label{eq:Z1tomB3}
        \rho =
        m - n + T - n \left\lfloor \frac{m - n + T}{n} \right\rfloor
        \in \{0, \ldots, n-1\}.
    \end{equation}
  \end{subequations}

  Denote the density of $Z_{1}^{m}$ by $f_{\bfZ}$ and its
  conditional density given $T=t$ by $f_{\bfZ|T=t}$.  

  To establish~\eqref{eq:liminfHZ} we use
  Lemma~\ref{lem:minimaljensen}, which implies that 
  \begin{equation}
    \label{eq:Xmas100}
    h_{\alpha}\bigl(f_{\bfZ}\bigr) \geq \min_{0 \leq t \leq n-1}
    h_{\alpha}\bigl(f_{\bfZ|T = t}\bigr).
  \end{equation}
  
  To compute $h_{\alpha}\bigl(f_{\bfZ|T = 0}\bigr)$ we
  use~\eqref{eq:Z1tomAXmas} to obtain
  \begin{IEEEeqnarray}{rCl}
    \label{eq:lowerRenTeqZero1}
    h_{\alpha}\bigl(f_{\bfZ|T = 0}\bigr) & = & \left\lfloor \frac{m}{n} \right\rfloor
    h_{\alpha}(f_{n}) + h_{\alpha}(X_{1}, \ldots, X_{\tilde{\rho}}) \\
    & \geq & \left\lfloor \frac{m}{n} \right\rfloor
    h_{\alpha}(f_{n}) + 0 \wedge \min_{1 \leq \rho \leq n-1} \bigl\{ 
    h_{\alpha}(X_{1}, \ldots, X_{\rho}) \bigr\}.
    \IEEEeqnarraynumspace
    \label{eq:lowerRenTeqZero2}
  \end{IEEEeqnarray}
  where the second term on the RHS of~\eqref{eq:lowerRenTeqZero1}
  should be interpreted as zero when $\tilde{\rho}$ is
  zero, and where $a \wedge b$ denotes the minimum of $a$ and $b$.

  And to compute $h_{\alpha}\bigl(f_{\bfZ|T = t}\bigr)$ for $t \in
  \{1, \ldots, n-1\}$ we
    use~\eqref{eq:Z1tomA} to obtain
  \begin{IEEEeqnarray}{rCl}
    \label{eq:lb590}
      h_{\alpha}\bigl(f_{\bfZ|T = t}\bigr) 
    & = & 
    h_{\alpha}(X_{n-\rho'+1}, \ldots, X_{n}) \nonumber\\
    && +\: \left\lfloor
        \frac{m - n + t}{n} \right\rfloor h_{\alpha}(f_{n}) +
      h_{\alpha}(X_{1}, \ldots, X_{\rho}), \IEEEeqnarraynumspace
    \end{IEEEeqnarray}
    where $\rho,\rho'$ are obtained from \eqref{block:Z1tomB} by
    substituting $t$ for $T$, and the last term on the RHS should be
    interpreted as zero when $\rho$ is zero.

    It thus follows from~\eqref{eq:Xmas100},
    \eqref{eq:lowerRenTeqZero2}, \eqref{eq:lb590}, and the above
    interpretation that 
    \begin{IEEEeqnarray}{rCl}
      \label{eq:lb591}
      h_{\alpha}\bigl(f_{\bfZ}\bigr) 
      & \geq & \min_{1 \leq \rho' \leq n-1}
    \Bigl\{ h_{\alpha}(X_{n-\rho'+1}, \ldots, X_{n}) \Bigr\} \nonumber\\
    && +\: 0 \wedge \min_{1 \leq \rho \leq n-1} \Bigl\{h_{\alpha}(X_{1}, \ldots, X_{\rho}) \Bigr\}
    \nonumber \\
    && +\: \min_{0 \leq t \leq n-1} \biggl\{ 
    \left\lfloor
        \frac{m - n + t}{n} \right\rfloor h_{\alpha}(f_{n}) \biggr\}.
      \IEEEeqnarraynumspace
    \end{IEEEeqnarray}

    The first two terms do not depend on $m$ and are greater than
    $-\infty$ whenever \eqref{eq:RhoRhoprimeBig}
    holds. Dividing~\eqref{eq:lb591} by $m$ and letting $m$ tend to
    infinity (with $n$ held fixed), establishes~\eqref{eq:liminfHZ}.

  To establish~\eqref{eq:limsupHZ} we need an upper bound on
  $h_{\alpha}\bigl(f_{\bfZ}\bigr)$. Such a bound can be obtained from
  Lemma~\ref{lem:MixtureUpper}. The exact form of the bound depends on
  whether $\alpha$ exceeds $1$ or not. But either form leads
  to~\eqref{eq:limsupHZ} upon dividing by $m$ and letting it tend to
  infinity.
%

  To conclude the proof we note that~\eqref{eq:limHZ} follows
  from~\eqref{eq:limsupHZ} and~\eqref{eq:liminfHZ}.
\end{proof}

\begin{proof}[Proof of Theorem~\ref{thm:AlphaBig}]
  Since $h^{\star}(\cdot)$ is continuous on the ray $(\Gamma_{0},
  \infty)$, and since $\Gamma > \Gamma_{0}$ by the theorem's
  hypotheses, $h^{\star}(\cdot)$ is continuous at
  $\Gamma$. Consequently, we can find some $\Gamma'$ for which
  \begin{subequations}
    \begin{equation}
      \label{eq:Ch_GammaPrimeA}
      \Gamma' < \Gamma
    \end{equation}
    \begin{equation}
      \label{eq:Ch_GammaPrimeB}
      h^{\star}(\Gamma') > h^{\star}(\Gamma) - \tilde{\eps}.
    \end{equation}
  \end{subequations}
  These inequalities imply that we can find some $\delta > 0$ small enough so that
  \begin{subequations}
    \label{block:Ch_delta}
    \begin{equation}
      \label{eq:Ch_deltaA}
      \Gamma' + \delta < \Gamma
    \end{equation}
    \begin{equation}
      \label{eq:Ch_deltaB}
      h^{\star}(\Gamma') - \delta > h^{\star}(\Gamma) - \tilde{\eps}.
    \end{equation}
  \end{subequations}
  By Proposition~\ref{prop:CanAssume}, there exists some bounded
  density $f^{\star}$ supported by $\set{S}$ such that
\begin{subequations}
  \begin{equation}
    \label{eq:14PropUSE}
    \int f^{\star}(x) \> r(x) \dd x < \Gamma' + \delta,
  \end{equation}
  \begin{equation}
    \label{eq:13PropUSE}
    h(f^{\star}) >  h^{\star}(\Gamma') - \delta.
  \end{equation}
  Moreover, the boundedness of $f^{\star}$, the hypothesis that
  $\alpha > 1$, and Proposition~\ref{prop:RenyiOfBoundedDensity} imply
  that
  \begin{equation}
    \label{eq:15PropUSE}
    h_{\alpha}(f^{\star}) > -\infty.
  \end{equation}
  \end{subequations}
  These inequalities combine with~\eqref {block:Ch_delta} to imply
\begin{subequations}
  \label{block:Good_fstar}
  \begin{equation}
    \label{eq:Good_fstarA}
    \int f^{\star}(x) \> r(x) \dd x < \Gamma
  \end{equation}
  \begin{equation}
    \label{eq:Good_fstarB}
    h(f^{\star}) >  h^{\star}(\Gamma) - \tilde{\eps}.
  \end{equation}
  \end{subequations}
  We can hence choose $\eps > 0$ small enough so that
\begin{subequations}
  \label{block:Good_eps}
  \begin{equation}
    \label{eq:Good_epsA}
    \int f^{\star}(x) \> r(x) \dd x < \Gamma - \eps
  \end{equation}
  \begin{equation}
    \label{eq:Good_epsB}
    h(f^{\star}) >  h^{\star}(\Gamma) - \tilde{\eps} + \eps.
  \end{equation}
  \end{subequations}
  
  Let $f_{n}$ be the uniform density over 
  \begin{equation*}
    \weaktyp(f^{\star}) \cap \contyp{f^{\star}}.
  \end{equation*}

  The cost of $f_{n}$ can be bounded by noting that its support is
  contained in $\contyp{f^{\star}}$, and
  \begin{align*}
    x_{1}^{n} \in \contyp{f^{\star}} & \implies
    \frac{1}{n}\sum_{k=1}^n r(x_k) < \int f^{\star}(x) \> r(x) \dd{x}
    + \eps \\
    & \implies \frac{1}{n}\sum_{k=1}^n r(x_k) < \Gamma,
  \end{align*}
  where the second implication follows from~\eqref{eq:Good_epsA}. Thus,
  \begin{equation}
    \label{eq:SP_sum_const}
    \int_{\set{S}^{n}} f_{n}(\bfx) \sum_{i=1}^{n} r(x_{i}) \dd{\bfx} \leq n \Gamma.
  \end{equation}

  To lower-bound its R\'enyi entropy, we note that by the LLN (in combination
  with \eqref{eq:Good_epsA}) and the AEP (see Section~\ref{sec:weak_typ})
  \begin{equation}
    \label{eq:Stat_card_lb}
    \card{\weaktyp(f^{\star}) \cap \contyp{f^{\star}}} \geq
    (1-\epsilon) \> 2^{n(h(f^{\star})-\epsilon)}, \quad \text{$n$ large.}
  \end{equation}
  Consequently,
  \begin{equation*}
    h_{\alpha}(f_n) \geq n \bigl(h(f^{\star})-\epsilon\bigr) + \log(1-\epsilon)
    \quad \text{$n$ large}, 
  \end{equation*}
or, upon dividing by $n$,
\begin{equation}
  \label{eq:Witwe10}
    \frac{1}{n} h_{\alpha}(f_n) \geq h(f^{\star})-\eps + \frac{1}{n}\log(1-\epsilon)
  \end{equation}
  for all sufficiently large $n$.  We now choose $n$ large enough so
  that not only will \eqref{eq:Witwe10} hold but also its RHS
  satisfy
  \begin{equation*}
    h(f^{\star})-\eps + \frac{1}{n}\log(1-\epsilon) >
    h^{\star}(\Gamma) - \tilde{\eps}.
  \end{equation*}
  (This is possible by~\eqref{eq:Good_epsB}.) For this $n$ we thus
  have 
  \begin{equation}
    \label{eq:Good_n}
    \frac{1}{n} h_{\alpha}(f_n) >  h^{\star}(\Gamma) - \tilde{\eps}.
  \end{equation}

  The inequalities \eqref{eq:Good_n} and \eqref{eq:SP_sum_const}
  indicate that $f_{n}$ is a good candidate for the application of
  Proposition~\ref{prop:stationarization}. We hence proceed to check
  its hypotheses.

  By Lemma~\ref{lem:RenThenFin} and \eqref{eq:15PropUSE}, if $X_{1},
  \ldots, X_{n} \sim f_{n}$ then
  \begin{equation}
    h_{\alpha}(X_{1}, \ldots, X_{\rho}) > -\infty, 
      \quad \rho \in \{1, \ldots, n-1\},
  \end{equation}
  and, since $f_{n}$ is permutation invariant, we also infer
\begin{equation}
    h_{\alpha}(X_{n-\rho'+1},
      \ldots, X_{n}) > -\infty, 
      \quad \rho' \in \{1, \ldots, n-1\}
  \end{equation}
  so~\eqref{eq:RhoRhoprimeBig} holds. And, since $\alpha > 1$, it
  follows from~\eqref{eq:h_le_plus} that~\eqref{eq:RhoRhoprimeSmall}
  also holds. We can thus apply
  Proposition~\ref{prop:stationarization} to conclude the proof.
\end{proof}

\begin{proof}[Proof of Theorem~\ref{thm:AlphaSmall}]
  We first prove the theorem when $\card{\set{S}} = \infty$. We
  distinguish between two cases. The first case, which is the case
  with which we begin, is when there exists some $n \in \Naturals$ and
  a density $f_{n}^{\star}$ on $X_{1}, \ldots, X_{n}$ such that
  \begin{equation}
    \label{eq:CONSn66}
    \Prv{X_{i} \in \set{S}} = 1, \quad \E{r(X_{i})} \leq \Gamma, \quad
    i \in \{1, \ldots, n\}
  \end{equation}
  and
  \begin{equation}
    h_{\alpha}(X_{1}, \ldots, X_{n}) = + \infty.
  \end{equation}
  To apply Proposition~\ref{prop:stationarization} to this density, we note
  that, since $0 < \alpha < 1$, Inequality~\eqref{eq:h_ge_minus}
  implies~\eqref{eq:RhoRhoprimeBig}, and the proposition thus
  guarantees the existence of a stationary SP $\DTSPgen{Z}{k}$ satisfying~\eqref{eq:AAconstSP}
  and~\eqref{eq:liminfHZ} so
  \begin{equation}
    \lim_{m \to \infty} \frac{1}{m} h_{\alpha}(Z_{1}, \ldots,
          Z_{m}) = +\infty.
  \end{equation}
  This concludes the proof for the case at hand. 
  
  We next turn to the second case where $|\set{S}|$ is still infinite,
  but any tuple whose components satisfy the constraints has 
  R\'enyi entropy smaller than $\infty$:
  \begin{multline}
    \label{eq:Eilat_second_case}
    \biggl( \Prv{X_{i} \in \set{S}} = 1, \quad \E{r(X_{i})} \leq \Gamma, \quad
    i \in \{\nu_{1}, \ldots, \nu_{2}\} \biggr) \\
    \implies \biggl( h_{\alpha}(X_{\nu_{1}}, \ldots, X_{\nu_{2}}) <  \infty \biggr).
  \end{multline}
  
  Since $|\set{S}|$ is infinite, it follows from
  Proposition~\ref{prop:HstarProperties} that $h^{\star}(\Gamma) \to
  \infty$ as $\Gamma \to \infty$. Consequently, there exists some
  $\Gamma_{1}$ such that
  \begin{equation}
    \label{eq:Eilat10}
    h^{\star}(\Gamma_{1}) > \mathsf{M}.
  \end{equation}
  Since $h^{\star}$ is monotonic, there is no loss in generality in
  assuming, as we shall, that
  \begin{equation}
    \label{eq:Eilat20}
    \Gamma_{1}  > \Gamma.
  \end{equation}
  Let $\eps \in (0,1)$ be small enough so that
  \begin{equation}
    \label{eq:Eilat30}
    h^{\star}(\Gamma_{1}) > \mathsf{M} + 3 \eps
  \end{equation}
  \begin{equation}
    \label{eq:Eilat40}
    \Gamma_{0} + \eps < \Gamma < \Gamma_{1} - \eps. 
  \end{equation}
  Let the densities $f^{(0)}$ and $f^{(1)}$ be within $\eps$ of
  achieving $h^{\star}(\Gamma_0)$ and $h^{\star}(\Gamma_1)$ in the
  sense that their support is contained in $\set{S}$ and
  \begin{multline}
    \label{eq:Eilat_eps_achieve}
    \biggl( \int_{\set{S}} f^{(\ell)}(x) \> r(x) \dd{x} \leq \Gamma_{\ell},
    \quad h \bigl( f^{(\ell)} \bigr) > 
    h^{\star}(\Gamma_{\ell}) - \epsilon \biggr), \\ \ell\in\{0,1\}.
  \end{multline}
  For every $n \in \Naturals$, define
  \begin{equation}
    \label{eq:Eilat_def_Sell}
    \set{S}_{\ell} = \weaktyp\bigl(f^{(\ell)}\bigr) \cap
    \contyp{f^{(\ell)}},
    \quad \ell\in\{0,1\}.
  \end{equation}
  It follows from the LLN and AEP that, for all sufficiently large~$n$,
  \begin{equation}
    \label{eq:Eilat9_card_lb}
    \card{\set{S}_{\ell}} \geq
    (1-\epsilon) \> 2^{n(h(f^{(\ell)})-\epsilon)}, \quad \ell \in \{0,1\}. 
  \end{equation}
  Assume now that $n$ is large enough for this to hold.
  Let $\delta > 0$ be small enough so that
  \begin{equation}
    \label{eq:Eilat100}
    (1-\delta) \, (\Gamma_0 + \eps) + \delta \, (\Gamma_1 + \eps) \leq \Gamma. 
  \end{equation}
  (Such a $\delta$ can be found in view of~\eqref{eq:Eilat40}.)

  Consider now the mixture density
  \begin{equation}
    \label{eq:mixture_density}
    f_n(x_1^n) = (1-\delta)\frac{1}{\card{\set{S}_0}} \I{x_1^n \in
    \set{S}_0} + \delta  \frac{1}{\card{\set{S}_1}} \I{x_1^n\in
    \set{S}_1}. 
  \end{equation}
  Let $X_1^n$ be of density $f_n$. Using~\eqref{eq:Eilat100} and an
  argument similar to the one leading to~\eqref{eq:SP_sum_const} we
  obtain
  \begin{equation}
    \label{eq:Eilat110}
    \sum_{k=1}^{n} \E{r(X_{k})} \leq n \Gamma. 
  \end{equation}
  In fact, the permutation invariance of $f_{n}$ implies the stronger
  statement
  \begin{equation}
    \label{eq:Eilat120}
    \E{r(X_{k})} \leq \Gamma, \quad k = 1, \ldots, n.
  \end{equation}

  We next lower-bound $h_{\alpha}(X_1^n)$. To this end, we first argue
  that the sets $\set{S}_0$ and $\set{S}_1$ are disjoint. To see this,
  note that by the definition of the sets $\contyp{f^{(0)}}$,
  $\contyp{f^{(1)}}$ and by~\eqref{eq:Eilat_eps_achieve}
\begin{align}
    x_{1}^{n} \in \contyp{f^{(0)}} & \implies
    \frac{1}{n}\sum_{k=1}^n r(x_k) < \int f^{(0)}(x) \> r(x) \dd{x}
    + \eps \nonumber \\
    & \implies \frac{1}{n}\sum_{k=1}^n r(x_k) < \Gamma_{0} + \eps, \label{eq:skin10}
  \end{align}
  and
  \begin{align}
    x_{1}^{n} \in \contyp{f^{(1)}} & \implies
    \frac{1}{n}\sum_{k=1}^n r(x_k) >  \int f^{(1)}(x) \> r(x) \dd{x}
    - \eps \nonumber \\
    & \implies \frac{1}{n}\sum_{k=1}^n r(x_k) > \Gamma_{1} - \eps, \label{eq:skin20}
  \end{align}
  From~\eqref{eq:Eilat40}, \eqref{eq:skin10}, and~\eqref{eq:skin20} we
  now conclude that $\contyp{f^{(0)}}$ and $\contyp{f^{(1)}}$ are
  disjoint and hence also $\set{S}_0$ and $\set{S}_1$. 

  Having established that $\set{S}_0$ and $\set{S}_1$ are disjoint, we
  can now compute $h_{\alpha}(f_{n})$ directly to obtain:
  \begin{IEEEeqnarray}{rCl}
    \frac{h_{\alpha}(X_1^n)}{n}
    & = & \frac{1}{n(1-\alpha)} \log \Bigl((1-\delta)^{\alpha}
    \card{\set{S}_0}^{1-\alpha}+\delta^{\alpha}
    \card{\set{S}_1}^{1-\alpha}\Bigr)
    \nonumber\\
    & \geq & \frac{1}{n(1-\alpha)} \log \Bigl(\delta^{\alpha}
    \card{\set{S}_1}^{1-\alpha}\Bigr). \label{eq:Eilat117}
  \end{IEEEeqnarray}
  From this, \eqref{eq:Eilat9_card_lb}, \eqref{eq:Eilat_eps_achieve},
  and \eqref{eq:Eilat30} it now follows that we can find some
  sufficiently large $n$ for which
  \begin{equation}
    \label{eq:Eilat200}
    \frac{h_{\alpha}(X_1^n)}{n} > \mathsf{M}.
  \end{equation}
  To apply Proposition~\ref{prop:stationarization} we note
  that~\eqref{eq:Eilat120} and \eqref{eq:Eilat_second_case} imply that
  \eqref{eq:RhoRhoprimeSmall} holds. And the fact that $\alpha \in
  (0,1)$ implies by \eqref{eq:h_ge_minus} that
  \eqref{eq:RhoRhoprimeBig} holds. Hence, by the proposition, there
  exists a stationary SP satisfying the constraints and whose R\'eny
  rate is $n^{-1} h_{\alpha}(X_1^n)$ and thus exceeds
$\mathsf{M}$. This concludes the proof when $\card{\set{S}} =
\infty$. 

The proof when $\card{\set{S}} < \infty$ is very similar. In fact, it
is a bit simpler because $\card{\set{S}} < \infty$
implies~\eqref{eq:Eilat_second_case}. We begin the proof by noting
that, since $\card{\set{S}} < \infty$,
Proposition~\ref{prop:HstarProperties} implies that $h^{\star}(\Gamma)
\to \log \card{\set{S}}$ as $\Gamma \to \infty$. Consequently, there
exists some~$\Gamma_{1}$ such that
  \begin{equation}
    \label{eq:FiniteEilat10}
    h^{\star}(\Gamma_{1}) > \log \card{\set{S}} - \tilde{\eps}.
  \end{equation}
  Replacing $\mathsf{M}$ with $\log \card{\set{S}} - \tilde{\eps}$ in
  the derivation that leads from \eqref{eq:Eilat10} to
  \eqref{eq:Eilat200}, we obtain a density $f_{n}$ for which
  \begin{equation}
    \label{eq:FiniteEilat200}
    \frac{h_{\alpha}(X_1^n)}{n} > \log \card{\set{S}} - \tilde{\eps}.
  \end{equation}
  The result then follows from Proposition~\ref{prop:stationarization}
  by noting that the LHS of \eqref{eq:RhoRhoprimeSmall} is upper
  bounded by $n \log \card{\set{S}}$ and by noting
  that~\eqref{eq:RhoRhoprimeBig} holds by~\eqref{eq:h_ge_minus}
  because $0 < \alpha < 1$.
\end{proof}

\section{Proof of Theorem~\ref{thm:burgrenyi}}
\label{sec:proofburgrenyi}

\begin{proof}[Proof of Theorem~\ref{thm:burgrenyi}]
  Recall the assumption that the $(p+1)\times (p+1)$ matrix whose
  Row-$\ell$ Column-$m$ element is $\alpha_{|\ell-m|}$ is positive
  definite. This
implies \cite{pourahmadi2001foundations} that there exist constants
$a_1, \ldots, a_p, \sigma^2$ and a $p\times p$ positive definite
matrix $\mat{K}_p$ such that the following holds:\footnote{The
  Row-$\ell$ Column-$m$ element of the matrix $\mat{K}_{p}$ is
  $\alpha_{|\ell-m|}$. This matrix is thus the result of deleting the
  last column and last row of the $(p+1)\times(p+1)$ matrix that we
  assumed was positive definite.}  if the random $p$-vector $(W_{1-p}, \ldots,
W_{0})$ is of second-moment matrix $\mat{K}_p$ (not necessarily
centered) and if $\{Z_i\}_{i=1}^{\infty}$ are independent of
$(W_{1-p}, \ldots, W_{0})$ with
\begin{IEEEeqnarray}{rCl"l}
  \IEEEyesnumber
  \label{eq:zcons}
  \E{Z_i} & = & 0, & i\in\Naturals,
  \IEEEyessubnumber
  \label{eq:zcons_a}
  \\
  \E{Z_i Z_j} & = & \sigma^2 \I{i=j}, & i,j\in\Naturals,
  \IEEEyessubnumber
  \label{eq:zcons_b}  
\end{IEEEeqnarray}
then the process defined inductively via
\begin{IEEEeqnarray}{c}
  X_i = \sum_{k=1}^p a_i X_{i-k} + Z_i, \quad i\in\Naturals 
  \label{eq:a}
\end{IEEEeqnarray}
with the initialization
\begin{IEEEeqnarray}{c}
  (X_{1-p}, \ldots, X_{0}) = (W_{1-p}, \ldots, W_{0})
  \label{eq:b}
\end{IEEEeqnarray}
satisfies the constraints~\eqref{eq:Burg_cons}.

(By Burg's maximum entropy theorem
\cite[Theorem~12.6.1]{cover2006elements}, of all stochastic processes
satisfying \eqref{eq:Burg_cons} the one of highest Shannon rate is the
$p$-th order Gauss-Markov process. It is obtained when $(W_{1-p},
\ldots, W_{0})$ is a centered Gaussian and $\{Z_i\}$ are IID
$\sim\Normal{0}{\sigma^2}$. Its Shannon entropy rate is $(1/2) \log
(2\pi e \sigma^2).$)

We first consider the case where $\alpha > 1$. Let $a_1, \ldots,
  a_p, \sigma^2$ and $\mat{K}_p$ be as above, and let $\eps > 0$ be
  arbitrarily small. By Proposition~\ref{cor:SecondMoment} there
  exists a SP $\{Z_i\}$ such that \eqref{eq:zcons}
  holds and such that
  \begin{IEEEeqnarray}{c}
    \lim_{n\to\infty} \frac{1}{n} h_{\alpha}(Z_1, \ldots, Z_n) 
    \ge \frac{1}{2}\log(2\pi e \sigma^2) - \eps.
    \label{eq:df1}    
  \end{IEEEeqnarray}
  The matrix $\mat{K}_p$ is positive definite, so by the spectral
  representation theorem we can find vectors $\vect{w}_1, \ldots,
  \vect{w}_p\in\Reals^p$ and constants $q_1, \ldots, q_p >0$ with
  $q_1+\cdots+q_p=1$ such that
  \begin{IEEEeqnarray}{c}
    \mat{K}_p = \sum_{\ell=1}^p q_{\ell}
    \vect{w}_{\ell} \trans{\vect{w}_{\ell}}.
    \label{eq:kpassum}    
  \end{IEEEeqnarray}
  (The vectors are eigenvectors of $\mat{K}_p$, and the constants
  $q_1, \ldots, q_p$ are the scaled eigenvalues of $\mat{K}_p$.) Draw
  the random vector $\vect{W}$ independently of $\{Z_i\}$ with 
  \begin{IEEEeqnarray*}{c}
    \Prv{\vect{W}=\vect{w}_{\ell}} = q_{\ell},
  \end{IEEEeqnarray*}
  so that, by \eqref{eq:kpassum}, 
  \begin{IEEEeqnarray*}{c}
    \E{\vect{W}\trans{\vect{W}}} = \mat{K}_p.
  \end{IEEEeqnarray*}
  Construct now the stochastic process $\{X_i\}$ using \eqref{eq:a}
  initialized with $\trans{(X_{1-p}, \ldots, X_{0})}$ being set to
  $\vect{W}$.

  The resulting SP thus
  satisfies~\eqref{eq:Burg_cons}. We next study its R\'enyi
  rate. To that end, we study the R\'enyi entropy of the vector
  $X_1^n$. Let $f_{\vect{X}}$ denote its density, and let
  $f_{\vect{X}|\vect{w}_{\ell}}$ denote its conditional density given
  $\vect{W}=\vect{w}_{\ell}$, so
  \begin{IEEEeqnarray*}{c}
    f_{\vect{X}}(\vect{x}) = \sum_{\ell=1}^p q_{\ell}
    f_{\vect{X}|\vect{w}_{\ell}}(\vect{x}), \quad \vect{x}\in\Reals^n.
  \end{IEEEeqnarray*}
  Consequently, by Lemma~\ref{lem:minimaljensen},
  \begin{IEEEeqnarray}{c}
    h_{\alpha}(f_{\vect{X}}) \ge \min_{1\le\ell\le p}
    h_{\alpha}(f_{\vect{X}|\vect{w}_{\ell}}),
    \label{eq:lbmin}
  \end{IEEEeqnarray}
  and by Lemma~\ref{lem:MixtureUpper}
\begin{equation}
    \label{eq:ubBurg}
    h_{\alpha}(f_{\vect{X}}) \le \min_{1\le \ell \le p} \Bigl\{\frac{\alpha}{1-\alpha}
    \log q_{\ell} +  h_{\alpha}(f_{\vect{X}|\vect{w}_{\ell}}) \Bigr\}.
  \end{equation}
  We next study $h_{\alpha}(f_{\vect{X}|\vect{w}_{\ell}})$ for any
  given $\ell\in\{1, \ldots, p\}$. Recalling that $\vect{W}$ and
  $\{Z_i\}$ are independent, we conclude that, conditional on
  $\vect{W}=\vect{w}_{\ell}$, the random variables $X_1, \ldots, X_n$
  are generated inductively via \eqref{eq:a} with the
  initialization 
  \begin{IEEEeqnarray*}{c}
    \trans{(X_{1-p}, \ldots, X_{0})} = \vect{w}_{\ell}.
  \end{IEEEeqnarray*}
  Conditionally on $\vect{W}=\vect{w}_{\ell}$, the random variables
  $X_1, \ldots, X_n$ are thus an affine transformation of $Z_1, \ldots,
  Z_n$. The transformation is of unit Jacobian (because the
  partial-derivatives matrix has $1$'s on the diagonal and $0$'s on
  the upper triangle), and thus
  \begin{equation}
    h_{\alpha}(f_{\vect{X}|\vect{w}_{\ell}}) = h_{\alpha}(Z_1, \ldots,
    Z_n), \quad \ell\in\{1, \ldots, p\}.
  \end{equation}
  From this, \eqref{eq:lbmin}, and~\eqref{eq:ubBurg} it follows that
  \begin{IEEEeqnarray*}{c}
    h_{\alpha}(Z_1^{n}) \leq h_{\alpha}(f_{\vect{X}}) \leq \min_{1\le \ell \le p} \Bigl\{\frac{\alpha}{1-\alpha}
    \log q_{\ell} \Bigr\} + h_{\alpha}(Z_1^{n}).
  \end{IEEEeqnarray*}
  Dividing by $n$ and using \eqref{eq:df1} establishes the result.

  We next turn to the case $0 < \alpha < 1$. For every $\mathsf{M}>0$
  arbitrarily large, we use Proposition~\ref{cor:SecondMoment} to
  construct $\{Z_i\}$ as above but with 
  \begin{IEEEeqnarray*}{c}
    \lim_{n\to\infty} \frac{1}{n} h_{\alpha}(Z_1, \ldots, Z_n) 
    \ge \mathsf{M}.
  \end{IEEEeqnarray*}
  The proof continues as for the case where $\alpha$ exceeds one.
\end{proof}

\section{Discussion}
\subsection{On Theorem~\ref{thm:AlphaBig}}
As the following heuristic argument demonstrates, one has to walk a
fine line in order to achieve the supremum promised in
Theorem~\ref{thm:AlphaBig}. To see why, let us focus on the case where
$h^{\star}(\cdot)$ is strictly increasing and where there exist
real constants $\lambda_{0},\lambda_{1} \in \Reals$ for which the function
$f^{\star}(x) = \exp{\bigl(\lambda_{0} + \lambda_{1} r(x) \bigr)} \I{x
  \in \set{S}}$ is a density achieving $h^{\star}(\Gamma)$.
For any other density $g$ supported on $\set{S}$ and satisfying
  \begin{equation}
   \label{eq:Dis_g_cons}
    \int_{\set{S}} g(x) \, r(x) \dd{x} = \Gamma
  \end{equation}
  we then have (as in the proof of
  \cite[Theorem~12.1.1]{cover2006elements})
  \begin{align}
    h(g) & =  h(f^{\star}) - D(g\|f^{\star}) \\
    & = h^{\star}(\Gamma) - D(g\|f^{\star}).
  \end{align}
  
  Using this and~\eqref{eq:up_by_sum_Shannon} we thus obtain that if
  $\DTSPgen{Z}{k}$ is a stationary SP and if $f_{Z}$ is the
  density of $Z_{1}$ and 
  \begin{equation}
   \label{eq:Dis_fZ_cons}
    \int_{\set{S}} f_{Z}(x) \, r(x) \dd{x} = \Gamma,
  \end{equation}
  then
  \begin{equation}
    \label{eq:fine_line}
    h_{\alpha}(\DTSPgen{Z}{k}) \leq h^{\star}(\Gamma) - D(f_{Z}\|f^{\star}),
    \quad \alpha > 1.
  \end{equation}
  Thus, for $h_{\alpha}(\DTSPgen{Z}{k})$ to be close to
  $h^{\star}(\Gamma)$, the density of $Z_{1}$ must be ``close'' (in
  relative-entropy) to $f^{\star}$.\footnote{We are ignoring here the
    fact that one might consider approaching the supremum
    with~\eqref{eq:Dis_fZ_cons} only being an inequality.}  We can
  repeat this argument for the joint density of $Z_{1},Z_{2}$ to infer
  that $Z_{1}$ and $Z_{2}$ must be ``nearly independent'' with each
  being of density ``nearly'' $f^{\star}$. More generally, for every
  fixed $m \in \Naturals$ the joint density of $Z_{1}, \ldots, Z_{m}$
  must be nearly of a product form. But, of course choosing
  $\DTSPgen{Z}{k}$ IID will not work, because this choice would lead
  to a R\'enyi rate equal to $h_{\alpha}(f_{Z_{1}})$, which is
  typically smaller than $h(Z_{1})$ (see~\eqref{eq:Ren_le_Shannon}).

\subsection{On Theorem~\ref{thm:burgrenyi}}
Theorem~\ref{thm:burgrenyi} has bearing on the spectral estimation
problem, i.e., the problem of extrapolating the values of the
autocovariance sequence from its first $p+1$ values. One approach is
to choose the extrapolated sequence to be the autocovariance sequence
of the stochastic process that---among all stochastic processes that
have an autocovariance sequence that starts with these $p+1$
values---maximizes the Shannon rate, namely the $p$-th order
Gauss-Markov process (Burg's theorem).  

A different approach might be to choose some $\alpha > 1$ and to
replace the maximization of the Shannon rate with that of the
order-$\alpha$ R\'enyi rate. As we next argue,
Theorem~\ref{thm:burgrenyi} shows that this would result in the same
extrapolated sequence. Indeed, inspecting the proof of the theorem we
see that the stochastic process $\{X_{i}\}$ that we constructed, while
not a Gauss-Markov process, has the same autocovariance sequence as
the $p$-th order Gauss-Markov process that satisfies the
constraints. And, for $\alpha > 1$ the supremum can only be achieved
by a stochastic process of this autocovariance sequence: for any other
autocovariance function the R\'enyi rate is upper bounded by the
Shannon rate (because $\alpha > 1$), and the latter is upper bounded
by the Shannon rate of the Gaussian process, which, unless the
autocovariance sequence is that of the $p$-th order Gauss-Markov
process, is strictly smaller than the supremum (Burg's theorem).

\appendix

\section{Proof of Proposition~\ref{prop:CanAssume}}
\label{app:CanAssume}

In this appendix we present two lemmas, which we then use to prove
Proposition~\ref{prop:CanAssume} on approaching $h^{\star}(\Gamma)$
using bounded densities.

\begin{lemma}
  \label{lem:anotherlemma}
  Let $f$ be a density supported by $\set{S}$ for which $h(f)$ is defined;
  \begin{equation}
    \label{eq:cost_integrable}
    \int f(x) \> |r(x)| \dd x < \infty;
  \end{equation}
 and for which 
  \begin{equation}
    \label{eq:cons13}
    \int f(x) \> r(x) \dd x \le \Gamma
  \end{equation}
  for some $\Gamma \in \Reals$. Then for every $\delta > 0$ there
  exists a density~$\tilde{f}$ that  is bounded, supported by $\set{S}$,
  and that satisfies
  \begin{equation}
    \label{eq:14}
    \int \tilde{f}(x) \> r(x) \dd x \le \Gamma + \delta
  \end{equation}
  and 
  \begin{equation}
    \label{eq:13}
    h(\tilde{f}) \ge h(f) - \delta.
  \end{equation}
\end{lemma}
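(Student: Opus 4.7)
I would approach Lemma~\ref{lem:anotherlemma} by a truncate-and-renormalize construction. Fix a large constant $\mathsf{M} > 0$, set
\[
g_{\mathsf{M}}(x) = \min\{f(x), \mathsf{M}\} \, \I{x \in \set{S}}, \qquad Z_{\mathsf{M}} = \int g_{\mathsf{M}}(x) \dd x, \qquad \tilde{f}(x) = \frac{g_{\mathsf{M}}(x)}{Z_{\mathsf{M}}}.
\]
By construction $\tilde{f}$ is a density, is bounded by $\mathsf{M}/Z_{\mathsf{M}}$, and is supported by $\set{S}$. The plan is to show that, as $\mathsf{M} \to \infty$, the normalizing constant $Z_{\mathsf{M}}$ tends to one, the expected cost under $\tilde{f}$ tends to $\int f \cdot r$, and $h(\tilde{f})$ tends to $h(f)$. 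Picking $\mathsf{M}$ large enough then gives both \eqref{eq:14} and \eqref{eq:13}.

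\textbf{Normalization and cost.} Since $g_{\mathsf{M}} \uparrow f$ pointwise as $\mathsf{M} \to \infty$, the Monotone Convergence Theorem gives $Z_{\mathsf{M}} \uparrow 1$. For the cost, $|g_{\mathsf{M}}(x) r(x)| \leq f(x)|r(x)|$ with the right-hand side integrable by \eqref{eq:cost_integrable}, so Dominated Convergence yields $\int g_{\mathsf{M}} r \dd x \to \int f r \dd x$. Dividing by $Z_{\mathsf{M}}$ and using \eqref{eq:cons13}, we conclude that for all sufficiently large $\mathsf{M}$,
\[
\int \tilde{f}(x) \, r(x) \dd x \leq \Gamma + \delta.
\]

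\textbf{Entropy convergence.} Writing out $-\tilde{f}\log\tilde{f}$ explicitly,
\[
h(\tilde{f}) = \frac{1}{Z_{\mathsf{M}}}\left(-\int g_{\mathsf{M}}(x) \log g_{\mathsf{M}}(x) \dd x\right) + \log Z_{\mathsf{M}},
\]
and splitting the range of integration according to whether $f(x) \leq \mathsf{M}$ or $f(x) > \mathsf{M}$,
\[
-\int g_{\mathsf{M}} \log g_{\mathsf{M}} \dd x = -\int_{\{f \leq \mathsf{M}\}} f \log f \dd x - \mathsf{M}\log\mathsf{M}\cdot\bigcard{\{f > \mathsf{M}\}}.
\]
The first term converges to $h(f)$ by applying Dominated Convergence to the positive and negative parts of $f \log f$ separately (both integrable because $h(f)$ is defined and finite). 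For the tail term, I would use the elementary bounds
\[
\mathsf{M}\cdot\bigcard{\{f > \mathsf{M}\}} \leq \int_{\{f > \mathsf{M}\}} f \dd x, \qquad \log\mathsf{M}\cdot\int_{\{f > \mathsf{M}\}} f \dd x \leq \int_{\{f > \mathsf{M}\}} f \log f \dd x,
\]
the second valid since $f > \mathsf{M}$ on the indicated set. The finiteness of the positive part of $f \log f$ (a consequence of $h(f)$ being finite) forces the rightmost integral to vanish as $\mathsf{M} \to \infty$, so the tail term tends to zero. Combined with $\log Z_{\mathsf{M}} \to 0$, this yields $h(\tilde{f}) \to h(f)$, and hence \eqref{eq:13} for all sufficiently large $\mathsf{M}$.

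\textbf{Main obstacle.} The delicate step is the entropy convergence, because Shannon entropy is notoriously discontinuous under weak or even pointwise limits. What saves us is that truncation is a particularly tame modification: it is monotone, it fixes $f$ on the set $\{f \leq \mathsf{M}\}$ which exhausts $\set{S}$, and the modified region $\{f > \mathsf{M}\}$ is controlled on two sides by the positive part $\int (f\log f)_+ \dd x$, which is finite precisely because $h(f)$ is. (If $h(f) = -\infty$ the conclusion is vacuous; the interesting case is $h(f)$ finite, and $h(f) = +\infty$ can be handled by a separate argument since truncated densities can still have arbitrarily large entropy on sets of large measure.)
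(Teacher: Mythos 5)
Your construction is the same as the paper's: truncate $f$ at a level $\mathsf{M}$ and renormalize, i.e.\ $\tilde{f}=\frac{1}{\beta}\bigl(f\wedge\mathsf{M}\bigr)$ with $\beta=\int (f\wedge\mathsf{M})$. The cost bound is handled essentially identically (MCT/DCT using~\eqref{eq:cost_integrable}). Where you diverge from the paper is in the entropy step. You prove two-sided convergence $h(\tilde{f})\to h(f)$, splitting off the tail $\mathsf{M}\log\mathsf{M}\,\bigcard{\{f>\mathsf{M}\}}$ and controlling it through the chain
\begin{equation*}
  \mathsf{M}\log\mathsf{M}\,\bigcard{\{f>\mathsf{M}\}}
  \;\le\; \log\mathsf{M}\int_{\{f>\mathsf{M}\}}f
  \;\le\; \int_{\{f>\mathsf{M}\}}f\log f \;\longrightarrow\; 0,
\end{equation*}
which relies on $\int_{\{f>1\}}f\log f<\infty$ (the negative part of $h(f)$, finite whenever $h(f)>-\infty$). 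This is correct. The paper instead insists that $\mathsf{M}\ge 1$ and then uses a purely pointwise monotonicity observation: the map $\xi\mapsto\xi\log(1/\xi)$ is unchanged by the truncation $\xi\mapsto\xi\wedge\mathsf{M}$ on $\{f\le 1\}$ and is decreasing for $\xi>1$, so $\int (f\wedge\mathsf{M})\log\frac{1}{f\wedge\mathsf{M}}\ge h(f)$ holds for \emph{every} such $\mathsf{M}$ with no tail estimate at all; all that remains is to absorb the factors $\log\beta$ and $1/\beta$. The paper's route gives only the one-sided bound the lemma actually requires, handles $h(f)=+\infty$ automatically in the same line, and avoids the DCT/tail bookkeeping. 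Your proof achieves more (actual convergence) at the price of a longer argument, and the $h(f)=+\infty$ case, which you relegate to a hand-wavy parenthetical, in fact follows from your own estimates: the partial integral $\int_{\{f\le\mathsf{M}\}}f\log\frac{1}{f}\to h^+(f)=+\infty$ by MCT while the tail term still vanishes since $h^-(f)<\infty$. Two small points to tighten: you should state explicitly that $\mathsf{M}\ge 1$ (needed for $\log\mathsf{M}\ge 0$ in the tail chain and for $\{f\le 1\}\subseteq\{f\le\mathsf{M}\}$), and the phrase ``both integrable because $h(f)$ is defined and finite'' should be weakened, since only $\int_{\{f>1\}}f\log f<\infty$ is needed and that already follows from $h(f)>-\infty$.
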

\begin{proof}
  Let $0 < \eps < 1$ be fixed (small), with its choice specified
  later. It follows from~\eqref{eq:cost_integrable}
  and the MCT that there exists some $\mathsf{M}_1$ sufficiently large
  so that
  \begin{equation*}
    \int \Bigl( f(x) - \bigl(f(x) \wedge \mathsf{M}_1 \bigr) \Bigr) \>
    |r(x)| \dd x  < \eps, 
  \end{equation*}
  where we recall that $a\wedge b$ stands for $\min\{a,b\}$. Since the density $f$
  integrates to 1, we can find some $\mathsf{M}_2$ sufficiently large
  so that 
  \begin{equation*}
    \int \bigl(f(x) \wedge \mathsf{M}_2 \bigr) \dd x > 1-\eps.
  \end{equation*}
  Define now 
  \begin{equation}
    \label{eq:defM}
    \mathsf{M} = \max\{1, \mathsf{M}_1, \mathsf{M}_2 \}.
  \end{equation}
  For this $\mathsf{M}$ we have:
  \begin{IEEEeqnarray}{c}
    \int \bigl(f(x) \wedge \mathsf{M} \bigr) \dd x > 1-\eps,
    \IEEEyesnumber\IEEEyessubnumber
    \label{eq:15a}
    \\
    \int \Bigl( f(x) - \bigl( f(x) \wedge \mathsf{M} \bigr) \Bigr) \>
    |r(x)| \dd x   < \eps, 
    \IEEEyessubnumber
    \label{eq:15b}
    \\
    \Bigl( f(x) \ge 1 \Bigr) \implies \Bigl( f(x) \wedge
    \mathsf{M} \ge 1 \Bigr).
    \IEEEyessubnumber
    \label{eq:15c}
  \end{IEEEeqnarray}
  Consider now the bounded density 
  \begin{subequations}
    \label{block:def_f_tilde}
  \begin{equation}
    \label{eq:def_f_tildeA}
    \tilde{f}(x) = \frac{1}{\beta} \bigl( f(x) \wedge \mathsf{M} \bigr)
  \end{equation}
  where 
  \begin{equation}
    \beta = \int \bigl( f(\tilde{x}) \wedge \mathsf{M}\bigr) \dd \tilde{x}.
  \end{equation}
  \end{subequations}
  Note that because $f(x) \wedge \mathsf{M}$ is upper-bounded by
  $f(x)$, which integrates to one, and because of~\eqref{eq:15a}
  \begin{equation}
    \label{eq:bound_onBeta}
    1 - \eps \leq \beta \leq 1,
  \end{equation}
  so
  \begin{equation}
    \bigl(f(x) \wedge \mathsf{M} \bigr) 
    \le \tilde{f}(x) \le 
    \frac{1}{1-\eps} \> \bigl( f(x) \wedge \mathsf{M} \bigr).
  \end{equation}
  Moreover, $\tilde{f}$ is supported by $\set{S}$.

  Given $\delta > 0$ we next show that by choosing $\epsilon$
  sufficiently small we can guarantee that both \eqref{eq:14} and
  \eqref{eq:13} hold. Be begin with the former.  Starting with
  \eqref{eq:def_f_tildeA} we have
  \begin{IEEEeqnarray*}{rCl}
    \IEEEeqnarraymulticol{3}{l}{%
      \int \tilde{f}(x) \> r(x) \dd x 
    }\nonumber\\*\quad%
    & =  & \frac{1}{\beta} \int \bigl( f(x) \wedge \mathsf{M} \bigr)
    \> r(x) \dd x
    \\
    & = &  \frac{1}{\beta}  \int \Bigl( f(x) - \bigl( f(x) - f(x)
    \wedge \mathsf{M} \bigr) \Bigr) \> r(x) \dd x
    \\
    & = &  \frac{1}{\beta}  \int f(x) \> r(x) \dd x \nonumber\\
    && + \frac{1}{\beta} \int \Bigl(
    f(x) -  \bigl(f(x) \wedge \mathsf{M} \bigr) \Bigr)
    \>\bigl(-r(x)\bigr) \dd x  
    \\
    & \le &  \frac{1}{\beta} \Gamma
    + \frac{1}{\beta}\int \Bigl( f(x) -  \bigl( f(x) \wedge \mathsf{M} \bigr)
    \Bigr) \> |r(x)| \dd x 
    \\
    & \le & \frac{1}{\beta} \> \Gamma + \frac{1}{\beta}\eps \\
    & \leq & \Gamma + \frac{\eps}{1-\eps} |\Gamma| + \frac{\eps}{1 - \eps},
    \IEEEyesnumber
    \label{eq:cost_tilde}    
  \end{IEEEeqnarray*}
  where the first inequality follows from~\eqref{eq:cons13}; the
  second from~\eqref{eq:15b}; and the last
  from~\eqref{eq:bound_onBeta}.

  We next study $h(\tilde{f})$. Starting with the definition of
  $\tilde{f}$,
  \begin{IEEEeqnarray*}{rCl}
    h(\tilde{f}) & = & 
    \int \frac{1}{\beta} \bigl(f(x)\wedge \mathsf{M} \bigr) \log
    \frac{\beta}{f(x)\wedge \mathsf{M}} \dd x
    \\
    & = & \log\beta + \frac{1}{\beta} \int \bigl(f(x)\wedge \mathsf{M}
    \bigr) \log \frac{1}{f(x)\wedge \mathsf{M}} \dd x
    \\
    & = & \log\beta 
    + \frac{1}{\beta} \int_{x\colon f(x) \le 1} \bigl(f(x)\wedge
    \mathsf{M} \bigr) \log \frac{1}{f(x)\wedge \mathsf{M}} \dd x
    \nonumber\\
    && +\> \frac{1}{\beta} \int_{x\colon f(x) > 1} \bigl(f(x)\wedge
    \mathsf{M} \bigr) \log \frac{1}{f(x)\wedge \mathsf{M}} \dd x.
    \IEEEyesnumber
    \label{eq:ABC}    
  \end{IEEEeqnarray*}
  By~\eqref{eq:defM}, $f(x) \wedge \mathsf{M} = f(x)$ whenever $f(x)
  \le 1$, so
  \begin{IEEEeqnarray}{rCl}
    \IEEEeqnarraymulticol{3}{l}{%
      \int_{x\colon f(x) \le 1} \bigl(f(x)\wedge
    \mathsf{M} \bigr) \log \frac{1}{f(x)\wedge \mathsf{M}} \dd x
    }\nonumber\\*\quad%
    & = & \int_{x\colon f(x) \le 1} f(x) \log \frac{1}{f(x)} \dd x.
    \label{eq:AA}
  \end{IEEEeqnarray}
  Since $\xi\log\xi^{-1}$ is decreasing for $\xi > 1$, and since $f(x)
  >1$ implies $f(x) \wedge \mathsf{M} > 1$ (by \eqref{eq:15c}), 
  \begin{IEEEeqnarray*}{rCl}
    \bigl(f(x)\wedge
    \mathsf{M} \bigr) \log \frac{1}{f(x)\wedge \mathsf{M}}
    \ge f(x) \log \frac{1}{f(x)}, \quad \Bigl( f(x) > 1 \Bigr)
  \end{IEEEeqnarray*}
  and hence
  \begin{IEEEeqnarray}{rCl}
    \IEEEeqnarraymulticol{3}{l}{%
      \int_{x\colon f(x) > 1} \bigl(f(x)\wedge
    \mathsf{M} \bigr) \log \frac{1}{f(x)\wedge \mathsf{M}} \dd x
    }\nonumber\\*\quad%
    & \ge & \int_{x\colon f(x) > 1} f(x) \log \frac{1}{f(x)} \dd x.
    \label{eq:AB}
  \end{IEEEeqnarray}
  Summing \eqref{eq:AA} and \eqref{eq:AB} we obtain
  \begin{IEEEeqnarray}{rCl}
    \int \bigl(f(x)\wedge \mathsf{M}
    \bigr) \log \frac{1}{f(x)\wedge \mathsf{M}} \dd x
    & \ge & h(f).
  \end{IEEEeqnarray}
  
  Using this, \eqref{eq:ABC}, and~\eqref{eq:bound_onBeta} we conclude
  that
  \begin{equation*}
    h(\tilde{f}) = h(f), \quad \text{whenever $h(f) = \infty$}
  \end{equation*}
  and
  \begin{multline}
    h(\tilde{f}) \ge  \log(1-\eps) + h(f) - \frac{\eps}{1-\eps} 
    |h(f)|, 
    \\ \quad \text{whenever $|h(f)| < \infty$}.
    \label{eq:Go2}
  \end{multline}
  And obviously $h(\tilde{f}) \geq h(f)$ whenever $h(f) = -\infty$.
  
  The result now follows by choosing $\eps$ small enough to guarantee
  that the RHS of \eqref{eq:cost_tilde} does not exceed $\Gamma +
  \delta$ and---if $h(f)$ is finite---that the RHS of \eqref{eq:Go2}
  exceeds $h(f)-\delta$.
\end{proof}

The following lemma addresses the case
where~\eqref{eq:cost_integrable} does not hold.
\begin{lemma}
  \label{lem:onemore}
  Let the density $f$ supported by $\set{S}$ be such that
  \begin{equation}
    \label{eq:cost_minus_infty}
    \int f(x) \> r(x) \dd x = -\infty
  \end{equation}
  and $h(f)$ is defined and exceeds $-\infty$
  \begin{equation}
    h(f) > -\infty.
  \end{equation}
  Then there exists a sequence of densities $\{\tilde{f}_k\}$
  supported by $\set{S}$ for which
  \begin{IEEEeqnarray*}{rCl}
    \int \tilde{f}_k(x) \> |r(x)| \dd x & < & \infty,
    \\
    \lim_{k\to\infty} h(\tilde{f}_k) & = & h(f),
    \\
    \noalign{\noindent and \vspace{\jot}}
    \lim_{k\to\infty} \int \tilde{f}_k(x) \> r(x) \dd x & = & -\infty.
  \end{IEEEeqnarray*}
\end{lemma}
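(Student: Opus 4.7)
The plan is to construct $\tilde{f}_k$ by restricting $f$ to the sublevel set where $r$ is not too negative and then renormalizing. Specifically, for each $k \in \Naturals$ let
\begin{equation*}
B_k = \{x \in \set{S} \colon r(x) \ge -k\}, \qquad c_k = \int_{B_k} f(x) \dd x,
\end{equation*}
and define $\tilde{f}_k(x) = c_k^{-1} f(x) \I{x \in B_k}$. Since $B_k \uparrow \set{S}$ (for real-valued $r$), MCT gives $c_k \to 1$, so $\tilde{f}_k$ is eventually a well-defined density supported on $\set{S}$.

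The first observation is that $\int f(x) r^{+}(x) \dd x < \infty$; otherwise $\int f \cdot r \dd x$ would be of the indeterminate form $\infty - \infty$ rather than $-\infty$. On $B_k$ we have $r^{-}(x) \le k$, so
\begin{equation*}
\int \tilde{f}_k(x) \> |r(x)| \dd x \le \frac{1}{c_k}\biggl(\int f(x) r^{+}(x) \dd x + k\biggr) < \infty,
\end{equation*}
establishing the integrability claim. For the divergence of the cost, split $\int \tilde{f}_k r \dd x = c_k^{-1}(\int_{B_k} f r^{+} \dd x - \int_{B_k} f r^{-} \dd x)$. The positive part is uniformly bounded by $\int f r^{+} \dd x$, while MCT applied to $f \cdot r^{-}$ (together with $\int f r^{-} \dd x = \infty$, which follows from $\int f r \dd x = -\infty$) forces the negative part to blow up, giving $\int \tilde{f}_k r \dd x \to -\infty$.

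The remaining step---and the main technical point---is showing $h(\tilde{f}_k) \to h(f)$. A direct computation gives
\begin{equation*}
h(\tilde{f}_k) = \log c_k + \frac{1}{c_k} \int_{B_k} f(x) \log \frac{1}{f(x)} \dd x.
\end{equation*}
Since $c_k \to 1$, the first term vanishes. For the second, I would decompose the integrand as $f \log(1/f) = g^{+} - g^{-}$, where $g^{+} = f \log(1/f) \I{f \le 1}$ and $g^{-} = f \log f \I{f > 1}$ are both nonnegative. The hypothesis $h(f) > -\infty$ guarantees that $\int g^{-} \dd x < \infty$, so MCT applied to $g^{+}$ and $g^{-}$ separately on the expanding sets $B_k$ yields $\int_{B_k} f \log(1/f) \dd x \to h(f)$, whether $h(f)$ is finite or $+\infty$. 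Combined with $c_k \to 1$, this gives $h(\tilde{f}_k) \to h(f)$.

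The main obstacle is precisely this entropy convergence: one has to accommodate the case $h(f) = +\infty$ and ensure the positive/negative decomposition makes the MCT argument valid despite $f \log(1/f)$ being only conditionally integrable. The cost/support properties are then almost immediate from the construction and the structure of $B_k$.
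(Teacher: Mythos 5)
Your construction is precisely the one the paper uses: your $B_k = \{x : r(x)\ge -k\}$ is identical to the paper's $\set{D}_k = \{x : r^-(x)\le k\}$, and you renormalize $f$ on this set, bound the cost integrals via the split $r = r^+ - r^-$, and handle the entropy limit through the decomposition $f\log(1/f) = g^+ - g^-$ with MCT on expanding sets, exactly as in the paper. No genuine differences to report.
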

\begin{proof}
  Define $r^+ \eqdef \max\{r,0\}$ and $r^- \eqdef \max\{-r,0\}$, so $r
  = r^+ - r^-$ with $r^+(x), r^-(x) \ge 0$. By~\eqref{eq:cost_minus_infty},
  \begin{subequations}
  \begin{equation}
    \int f(x) \> r^-(x) \dd x =\infty,
    \label{eq:f3}
  \end{equation}
  \begin{equation}
    \int f(x) \> r^+(x) \dd x < \infty.
    \label{eq:f378}
  \end{equation}
\end{subequations}
    
  Define for every $k\in\Naturals$
  \begin{equation}
    \label{eq:def_set_Dk}
    \set{D}_k \eqdef \bigl\{ x \colon r^-(x) \le k \bigr\}.
  \end{equation}
  By the MCT
  \begin{IEEEeqnarray}{rCl}
    \lim_{k\to\infty} \int_{\set{D}_{k}}  f(x) \> r^+(x) \dd x
    & = & \int f(x) \> r^+(x) \dd x
    \nonumber
    \\
    & < & \infty
    \IEEEyesnumber
    \label{eq:f2}
    \IEEEyessubnumber
    \label{eq:f2a}
  \end{IEEEeqnarray}
  and 
  \begin{IEEEeqnarray}{rCl}
    \lim_{k\to\infty} \int f(x) \> r^-(x) \I{x\in\set{D}_k} \dd x
    & = & \infty.
    \IEEEyessubnumber
    \label{eq:f2b}
  \end{IEEEeqnarray}
  Consequently,
  \begin{IEEEeqnarray}{rCl}
    \lim_{k\to\infty} \int_{\set{D}_k} f(x) \> r(x) \dd x
    & = & -\infty.
    \label{eq:f4}
  \end{IEEEeqnarray}
  
  The lemma's hypotheses guarantee that $h(f)$ is defined and exceeds
  $-\infty$. Consequently,
  \begin{equation*}
    h(f) = h^+(f) - h^-(f),
  \end{equation*}
  with
  \begin{equation}
    h^-(f) < \infty, \quad 
    h^+(f) \le \infty,
  \end{equation}
  where,
  \begin{IEEEeqnarray*}{rCl}
    h^+(f) & \eqdef & \int f(x) \log \frac{1}{f(x)} \I{f(x) \le 1} 
    \dd x,
    \\
    h^-(f) & \eqdef & \int f(x) \log f(x) \I{f(x) > 1}
    \dd x.
  \end{IEEEeqnarray*}
  By the MCT
  \begin{IEEEeqnarray*}{rCl}
    \int_{\set{D}_k} f(x) \log\frac{1}{f(x)} \I{f(x) \le 1} \dd x 
    & \uparrow & h^+(f)
    \\
    \noalign{\noindent and \vspace{\jot}}
    \int_{\set{D}_k} f(x) \log f(x) \I{f(x) > 1} \dd x 
    & \uparrow & h^-(f)    
  \end{IEEEeqnarray*}
  so, upon subtracting (and recalling $h^-(f) < \infty$)
  \begin{IEEEeqnarray}{rCl}
    \lim_{k\to\infty} \int_{\set{D}_k} f(x) \log \frac{1}{f(x)} \dd
    x & = & h(f).
    \label{eq:f10}
  \end{IEEEeqnarray}

  Define 
  \begin{equation*}
    \beta_k \eqdef \int_{\set{D}_k} f(x) \dd x.
  \end{equation*}
  Note that since $f$ is a density,
  \begin{equation*}
    \beta_k \le 1
  \end{equation*}
  and (by the MCT)
  \begin{equation}
    \beta_k \uparrow 1.
    \label{eq:f5}
  \end{equation}
  Consequently, 
  \begin{equation}
    \label{eq:range_beta_k}
    0 < \beta_k \le 1, \quad \text{$k$ large.}
  \end{equation}
  For every such sufficiently large $k$, define the density
  \begin{equation*}
    \tilde{f}_k(x) \eqdef \beta_{k}^{-1} f(x) \I{x\in\set{D}_k}.
  \end{equation*}
  It is supported by $\set{S}$, and its entropy $h(\tilde{f}_k)$ can
  be expressed as
  \begin{IEEEeqnarray*}{rCl}
    h(\tilde{f}_k) & = & \int \tilde{f}_k(x) \log \frac{1}{\tilde{f}_k(x)} \dd x\\
    & = & \int_{\set{D}_k} \tilde{f}_k(x) \log
    \frac{1}{\tilde{f}_k(x)} \dd x 
    \\
    & = & \int_{\set{D}_k} \frac{1}{\beta_k} \> f(x) \log
    \frac{\beta_k}{f(x)} \dd x 
    \\
    & = & \log\beta_k + \frac{1}{\beta_k} \int_{\set{D}_k} f(x) \log
    \frac{1}{f(x)} \dd x.
  \end{IEEEeqnarray*}
  From this, \eqref{eq:f10}, and \eqref{eq:f5} we obtain
  \begin{IEEEeqnarray}{rCl}
    \lim_{k\to\infty} h(\tilde{f}_k) & = & h(f).
    \label{eq:f15}
  \end{IEEEeqnarray}
  And as to the expectation of $r(x)$ under $\tilde{f}_k$:
  \begin{IEEEeqnarray*}{rCl}
    \IEEEeqnarraymulticol{3}{l}{%
      \int \tilde{f}_k(x) \> r(x) \dd x 
    }\nonumber\\*\quad%
    & = & \frac{1}{\beta_k}\int_{\set{D}_k} f(x) \> r(x) \dd x
    \\
    & = & \frac{1}{\beta_k}\int_{\set{D}_k} f(x) \> r^+(x) \dd x
    - \frac{1}{\beta_k}\int_{\set{D}_k} f(x) \> r^-(x) \dd x.
  \end{IEEEeqnarray*}
  The first term on the LHS is finite by~\eqref{eq:range_beta_k} and
  \eqref{eq:f378}. The second tends to $-\infty$ by~\eqref{eq:f5} and
  \eqref{eq:f4}. Hence,
  \begin{IEEEeqnarray}{rCl}
    \lim_{k\to\infty} \int \tilde{f}_k(x) \> r(x) \dd x 
    & = & -\infty.
    \label{eq:f20}
  \end{IEEEeqnarray}
  Moreover,
  \begin{IEEEeqnarray*}{rCl}
    \IEEEeqnarraymulticol{3}{l}{%
      \int \tilde{f}_k(x) \> |r(x)| \dd x 
    }\nonumber\\*\quad%
    & = & \frac{1}{\beta_k} \int_{\set{D}_k} f(x) \> r^+(x) \dd x
    + \frac{1}{\beta_k} \int_{\set{D}_k} f(x) \> r^-(x) \dd x
    \\
    & \le & \frac{1}{\beta_k} \int f(x) \> r^+(x) \dd x + k 
    \\
    & < & \infty,
    \IEEEyesnumber
    \label{eq:f25}
  \end{IEEEeqnarray*}
  where the first inequality follows from the nonnegativity of $r^{+}$
  and from the definition of the set
  $\set{D}_{k}$~\eqref{eq:def_set_Dk}, and the second inequality
  follows from~\eqref{eq:f378} and \eqref{eq:range_beta_k}.
  
  The lemma now follows from \eqref{eq:f25}, \eqref{eq:f15}, and \eqref{eq:f20}.
\end{proof}

\begin{proof}[Proof of Proposition~\ref{prop:CanAssume}]
  Since $\Gamma$ exceeds $\Gamma_{0}$, it follows
  from~\eqref{block:amos} that
  \begin{equation}
    \label{eq:h_is_finite}
    -\infty < h^{\star}(\Gamma) < \infty.
  \end{equation}
  Let the density $f$ nearly achieve $h^{\star}(\Gamma)$ in the
  sense that it is supported by $\set{S}$ and that
  \begin{equation}
    \label{eq:f_almost_achieve17}
    \int f(x) \, r(x) \dd{x} \leq \Gamma, \quad\text{and} \quad h(f) >
    h^{\star}(\Gamma) - \frac{\delta}{2}.
  \end{equation}
  By~\eqref{eq:h_is_finite}, \eqref{eq:f_almost_achieve17}, and the
  definition of $h^{\star}(\Gamma)$,
  \begin{equation}
    \label{eq:h_of_f_is_finite}
    -\infty < h(f) < \infty.
  \end{equation}

  If $\int f(x) \abs{r(x)} \dd{x}$ is finite, then the result follows
  directly from Lemma~\ref{lem:anotherlemma}. It remains to prove the
  result when this integral is infinite.
  In this case $\int f(x) \> r(x) \dd x = -\infty$
  by~\eqref{eq:f_almost_achieve17} (because $\Gamma < \infty$). Using
  this, the finiteness of $h(f)$ \eqref{eq:h_of_f_is_finite}, and
  Lemma~\ref{lem:onemore}, we infer the existence of a density
  $\tilde{f}$ that supported by $\set{S}$ and for which
  \begin{subequations}
    \label{block:ball}
    \begin{equation}
      \label{eq:ball109}
      \int \tilde{f}(x) \> |r(x)| \dd x  <  \infty,
    \end{equation}
    \begin{equation}
      \label{eq:ball100}
      h(\tilde{f})  >   h(f) - \frac{\delta}{2},
    \end{equation}
    \begin{equation}
      \label{eq:ball110}
      \int \tilde{f}(x) \> r(x) \dd x  <  \Gamma.
    \end{equation}
  \end{subequations}
  Applying Lemma~\ref{lem:anotherlemma} to the density $\tilde{f}$, we
  conclude that there exists a bounded density $f^{\star}$ that is supported
  by $\set{S}$ and that satisfies
  \begin{equation}
    h(f^{\star}) > h(\tilde{f}) -\frac{\delta}{2} \quad \text{and}
    \quad \int f^{\star}(x) \> r(x) \dd{x} \leq \Gamma + \delta
  \end{equation}
  and hence, in view of~\eqref{block:ball} and~\eqref{eq:f_almost_achieve17}, 
\begin{equation}
    h(f^{\star}) > h^{\star}(\Gamma) - \delta \quad \text{and}
    \quad \int f^{\star}(x) \> r(x) \dd{x} \leq \Gamma + \delta.
  \end{equation}
  The existence of $f^{\star}$ concludes the proof of the proposition
  for the case where $\int f(x) \> \abs{r(x)} \dd{x}$ is infinite.
\end{proof}

\section*{Acknowledgment}
Discussions with Stefan M.\ Moser and Igal Sason are gratefully acknowledged.


\begin{thebibliography}{10}
\providecommand{\url}[1]{#1}
\csname url@samestyle\endcsname
\providecommand{\newblock}{\relax}
\providecommand{\bibinfo}[2]{#2}
\providecommand{\BIBentrySTDinterwordspacing}{\spaceskip=0pt\relax}
\providecommand{\BIBentryALTinterwordstretchfactor}{4}
\providecommand{\BIBentryALTinterwordspacing}{\spaceskip=\fontdimen2\font plus
\BIBentryALTinterwordstretchfactor\fontdimen3\font minus
  \fontdimen4\font\relax}
\providecommand{\BIBforeignlanguage}[2]{{%
\expandafter\ifx\csname l@#1\endcsname\relax
\typeout{** WARNING: IEEEtran.bst: No hyphenation pattern has been}%
\typeout{** loaded for the language `#1'. Using the pattern for}%
\typeout{** the default language instead.}%
\else
\language=\csname l@#1\endcsname
\fi
#2}}
\providecommand{\BIBdecl}{\relax}
\BIBdecl

\bibitem{BunteLapidothTasmenia}
C.~Bunte and A.~Lapidoth, ``R\'enyi entropy and quantization for densities,''
  in \emph{Proc. Information Theory Workshop}, Nov. 2014, pp. 258--262.

\bibitem{Burg}
J.~P. Burg, ``Maximum entropy spectral analysis,'' in \emph{Proc. 37th Meet.
  Society of Exploration Geophysicists, 1967. Reprinted in \emph{Modern
  Spectrum Analysis,} D. G. Childers, Ed. New York: IEEE Press, 1978 pp.
  34--41}, 1967.

\bibitem{cover2006elements}
T.~Cover and J.~Thomas, \emph{Elements of Information Theory}, 2nd~ed.\hskip
  1em plus 0.5em minus 0.4em\relax Hoboken, NJ: John Wiley \& Sons, 2006.

\bibitem{Eilat2014}
C.~Bunte and A.~Lapidoth, ``Maximizing {R}\'{e}nyi entropy rate,'' in
  \emph{Proc. of the 2014 IEEE 28-th Convention of Electrical and Electronics
  Engineers in Israel}, Eilat, Israel, December 3--5 2014.

\bibitem{SundaresanKumarForward}
M.~A. Kumar and R.~Sundaresan, ``Minimization problems based on a parametric
  family of relative entropies {I}: Forward projection.'' \emph{arXiv preprint
  arXiv:1410.2346}, 2014.

\bibitem{LutwakYangZhang2004}
E.~Lutwak, D.~Yang, and G.~Zhang, ``Moment-entropy inequalities,'' \emph{Ann.
  Probab}, vol.~32, no.~1B, pp. 757--774, 2004.

\bibitem{costa2003solutions}
J.~Costa, A.~Hero, and C.~Vignat, ``On solutions to multivariate maximum
  {$\alpha$}-entropy problems,'' in \emph{Energy Minimization Methods in
  Computer Vision and Pattern Recognition}.\hskip 1em plus 0.5em minus
  0.4em\relax Springer, 2003, pp. 211--226.

\bibitem{ZografosNadarjah2005}
K.~Zografos and S.~Nadarajah, ``Expressions for {R}\'enyi and {S}hannon
  entropies for multivariate distributions,'' \emph{Statistics and Probability
  Letters}, no.~71, pp. 71--84, 2005.

\bibitem{WangMadiman2014}
L.~Wang and M.~Madiman, ``Beyond the entropy power inequality, via
  rearrangements,'' \emph{{IEEE} Trans. Inf. Theory}, vol.~60, no.~9, pp.
  5116--5137, Sept. 2014.

\bibitem{Rached_Alajaji_Campbell}
Z.~Rached, F.~Alajaji, and L.~Campbell, ``{R}{\'e}nyi's divergence and entropy
  rates for finite alphabet markov sources,'' \emph{{IEEE} Trans. Inf. Theory},
  vol.~47, no.~4, pp. 1553--1561, May 2001.

\bibitem{GolshaniPasha_Yair}
L.~Golshani, E.~Pasha, and G.~Yari, ``Some properties of {R}{\'e}nyi entropy
  and {R}{\'e}nyi entropy rate,'' \emph{Information Sciences}, vol. 179,
  no.~14, pp. 2426--2433, 2009.

\bibitem{GolshaniPasha_Gauss}
L.~Golshani and E.~Pasha, ``{R}{\'e}nyi entropy rate for {G}aussian
  processes,'' \emph{Information Sciences}, vol. 180, no.~8, pp. 1486--1491,
  2010.

\bibitem{Khodabin_ADK}
M.~Khodabin, ``{ADK} entropy and {ADK} entropy rate in irreducible- aperiodic
  {M}arkov chain and {G}aussian processes,'' \emph{Journal of the Iranian
  Statistical Society}, vol.~9, no.~2, pp. 115--126, 2010.

\bibitem{pourahmadi2001foundations}
M.~Pourahmadi, \emph{Foundations of Time Series Analysis and Prediction
  Theory}, ser. Wiley Series in Probability and Statistics.\hskip 1em plus
  0.5em minus 0.4em\relax Wiley, 2001.

\end{thebibliography}


\end{document}